\theoremstyle{definition}
\let\Algorithm\algorithm
\renewcommand\algorithm[1][]{\Algorithm[#1]\setstretch{1.2}}
\algnewcommand{\LeftComment}[1]{\Statex \(\triangleright\) #1}
\newtheorem{theorem}{Theorem}
\newtheorem{definition}{Definition}[theorem]
\newtheorem{lemma}[theorem]{Lemma}
\newtheorem{proposition}[theorem]{Proposition}
\newcolumntype{d}[1]{D..{#1}}
\newcommand{\splitcell}[1]{\begin{tabular}{@{}c@{}}#1\end{tabular}}
\journal{Journal of Medical Image Analysis}
\begin{document}

\begin{frontmatter}

\title{Multi-rater Prism: Learning self-calibrated medical image segmentation from multiple raters}

\author[2]{Junde Wu}
\author[2]{Huihui Fang}
\author[2]{Yehui Yang}
\author[3]{Yuanpei Liu}
\author[4]{Jing Gao}
\author[5]{Lixin Duan}
\author[6]{Weihua Yang}
\author[2]{Yanwu Xu\corref{cor1}}  

\address[2]{Baidu Inc., Beijing, China}
\address[3]{Department of Statistics and Actuarial Science, The University of Hong Kong, HongKong, China}
\address[4]{Electrical and Computer Engineering, Purdue University, West Lafayette, Indiana, USA}
\address[5]{School of Computer Science and Technology, University of Electronic Science and Technology of China, Chengdu, Sichuan, China}
\address[6]{Big Data and Artificial Intelligence Institute, Shenzhen Eye Hospital, Jinan University, Shenzhen, Guangdong, China}
\cortext[cor1]{Corresponding authors: Yanwu Xu (ywxu@ieee.org).}



\begin{abstract}
In medical image segmentation, it is often necessary to collect opinions from multiple experts to make the final decision. This clinical routine helps to mitigate individual bias. But when data is multiply annotated, standard deep learning models are often not applicable. In this paper, we propose a novel neural network framework, called Multi-Rater Prism (MrPrism) to learn the medical image segmentation from multiple labels. Inspired by the iterative half-quadratic optimization, the proposed MrPrism will combine the multi-rater confidences assignment task and calibrated segmentation task in a recurrent manner. In this recurrent process, MrPrism can learn inter-observer variability taking into account the image semantic properties, and finally converges to a self-calibrated segmentation result reflecting the inter-observer agreement. Specifically, we propose Converging Prism  (ConP) and Diverging Prism  (DivP) to process the two tasks iteratively. ConP learns calibrated segmentation based on the multi-rater confidence maps estimated by DivP. DivP generates multi-rater confidence maps based on the segmentation masks estimated by ConP. The experimental results show that by recurrently running ConP and DivP, the two tasks can achieve mutual improvement. The final converged segmentation result of MrPrism outperforms state-of-the-art  (SOTA) strategies on a wide range of medical image segmentation tasks.
\end{abstract}
\end{frontmatter}


\begin{figure}[!h]
\centering
\includegraphics[width=0.8\textwidth]{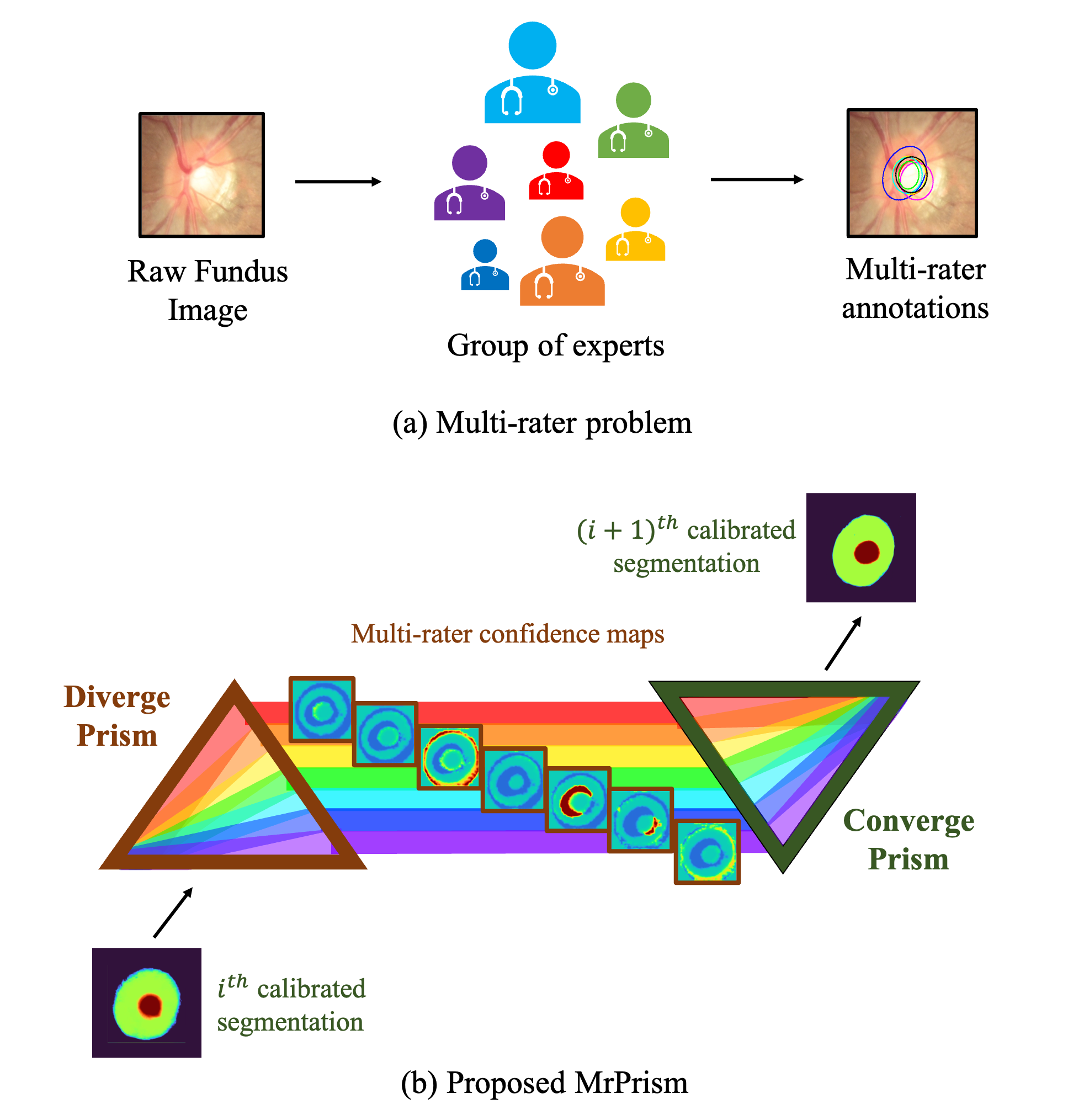}
\caption{ (a). A multi-rater example of optic-cup annotation. We can see that the inner-variance among the annotations is large.  (b). In the proposed MrPrism solution, Diverge Prism estimates the multi-rater confidences from the calibrated segmentation mask provided by Converge Prism. Converge Prism again predicts the calibrated segmentation mask from the multi-rater confidences provided by Diverge Prism. Through such an iterative optimization, the segmentation results can be gradually calibrated and refined.}
\vspace{-10pt}
\label{fig:prisms}
\end{figure}

\section{Introduction}\label{sec:introduction}
Clinically, it is often necessary to integrate the opinions of several different clinical experts to make the final decision. Thus, in the medical image analysis tasks, it is common to collect the data with multiple labels annotated by different clinical experts. Especially in medical image segmentation, the labels are likely to be highly subjective. An example of optic cup/disc segmentation on fundus images is shown in Fig. \ref{fig:prisms}  (a). We can see that the variance between different annotations is large. It makes the deep learning models which work well on nature images can not be directly applied to such a scenario. This problem is called 'multi-rater problem' by the prior works (\cite{warrens2010inequalities,ji2021learning,wu2022opinions}).

The multi-rater problem is previously studied from two different perspectives. The first branch of study learns calibrated results to reflect the inter-observer variability  (\cite{guan2018said,chou2019every,jensen2019improving,ji2021learning}). A body of techniques are proposed, including Monte Carlo dropout (\cite{kendall2015bayesian,kendall2017uncertainties}), ensembles (\cite{lee2016stochastic, lakshminarayanan2017simple}), multi-head networks (\cite{guan2018said, rupprecht2017learning}), and variational Bayesian inference (\cite{baumgartner2019phiseg,kohl2018probabilistic,kohl2019hierarchical}). They often adapted the model to be supervised by multiple labels instead of certain unique fusion. In this way, the model learns to represent the underlying  (dis-)agreement among multiple experts. Recently, Ji \MakeLowercase{\textit{et al.}} (\cite{ji2021learning}) shows that such calibrated results can achieve better performance on different kinds of fused labels. However, these methods often assume that each rater is equally important.
Thus, they are still limited to predicting the traditional majority vote  (average of multiple labels).

Another branch of work is to find the potential correct label from the multi-rater labels (\cite{warfield2004simultaneous,raykar2010learning,wu2022opinions}). Commonly, the confidence of each rater will be evaluated to calculate a weight fusion of multi-rater labels as the final unique label. One general and reliable way is to evaluate the multi-rater confidences based on the raw image structural prior (\cite{rodrigues2018deep,cao2019max,tanno2019learning,albarqouni2016aggnet}). Under such a prior constraint, annotations with a structure more similar to that observed on the raw image can obtain higher confidence.
However, these methods learned fused label without calibration. Therefore, the multi-rater confidences cannot be dynamically adjusted in the inference stage, which causes the results to be either overconfident  (when the confidences are implicitly learned (\cite{rodrigues2018deep,tanno2019learning})) or ambiguous  (when confidences are explicitly learned (\cite{cao2019max,albarqouni2016aggnet})). 



We note that on the one hand, the calibrated segmentation needs to know the confidence of the raters to predict the segmentation mask. On the other hand, multi-rater confidences assignment lacks a calibrated segmentation model to improve the dynamic representation capability.
In order to make up for the shortcomings of these two branches of study, we propose a novel framework, named Multi-Rater Prisms  (MrPrism), to jointly learn the calibrated segmentation and multi-rater confidences assignment in a recurrent manner. In this way, the two tasks can be jointly optimized to achieve mutual improvement. After the proposed recurrent learning from the multi-rater annotations, MrPrism can arrive at a self-calibrated segmentation mask in favor of the inter-observer agreement.

The proposed recurrence process follows the guidance of the iterative half-quadratic optimization, which simultaneously optimizes the calibrated segmentation and the multi-rater confidences under the constraint of the raw image structural prior. 
Specifically, proposed MrPrism consists of Converging Prism  (ConP) and Diverging Prism  (DivP). ConP learns calibrated segmentation based on the multi-rater confidences provided by DivP. It takes raw image segmentation features as input and multi-rater confidences as conditions to predict a unique segmentation mask. In this way, the raw image structural prior can be implicitly learned in the neural network. DivP learns to estimate multi-rater labels from the segmentation masks provided by ConP. The estimated probability maps of the raters can be used to represent the confidences of them naturally  (proved in Proposition 1).
Through the iterative optimization of DivP and ConP, both calibrated segmentation and multi-rater confidences can gradually converge to the optimal solutions under raw image structural prior.

ConP and DivP are mainly implemented by vision transformers to capture the tissue/organ structure globally. In particular, ConP is implemented by a combination of dot-product attentions and deconvolution layers. Thanks to the global and dynamic nature of the attentive mechanism, ConP can naturally integrate the multi-rater confidences dynamic representations to the segmentation features. DivP is implemented by the multi-head dot-product attentions. Each head of DivP estimates the corresponding rater annotation, and the estimated possibility maps are used to represent the multi-rater confidences. 
To avoid the trivial solution, we also shuffle the multi-head loss function of DivP to disentangle the multi-rater segmentation maps and confidences. 
The experimental results show that the final segmentation outperforms previous SOTA methods on various medical image segmentation tasks.


In brief, the paper's four main contributions are as follows:

1. We propose MrPrism, a novel medical image segmentation framework to learn from multi-rater annotated labels. Guided by the iterative half-quadratic optimization, MrPrism can calibrate the segmentation by itself through the recurrent learning. 


2. In MrPrism, we design ConP and DivP to learn the calibrated segmentation and multi-rater confidences assignment in a recurrent way, which helps to gain the mutual improvement on both tasks. 

3. We carefully design the structure and supervision of MrPrism for the iterative optimization. The techniques proposed, including attentive feature integration and shuffled supervision, are found effective for the recurrent learning.

4. We conduct comprehensive experiments on a wide range of medical segmentation tasks. The experimental results show that MrPrism consistently outperforms the previous SOTA strategies. The improvement is especially significant when the inter-rater variety is large, reflecting the superior ability of MrPrism to tackle the multi-rater disagreement.

\section{Related Works}\label{sec:ralated works}
Learning from the multi-rater labels/labels from crowds is a historical study that can be retrospected to the work of Dawid and Skene (\cite{dawid1979maximum}). Then a long list of papers was based on their method to forecast the accurate label from crowds. Literature (\cite{ ghosh2011moderates,dalvi2013aggregating,karger2014budget}) adopted the generative model of  (\cite{dawid1979maximum}) and extended it to multiple scenarios under certain assumptions. The expectation-maximization (EM) based methods were popular (\cite{raykar2010learning,rodrigues2018deep,albarqouni2016aggnet,cao2019max}) in the last decade. The basic idea is to estimate the posterior probability of each rater by taking the classifier predictions as the prior. Rodrigues \MakeLowercase{\textit{et al.}} (\cite{rodrigues2018deep}) extended this idea to neural-network-based classifiers to facilitate the learning. However, these methods generally adopted standard classifiers without calibration, making them inadaptable in the inference stage.

Another branch of study focused on capturing the inter-rater variety by learning from the multi-rater labels (\cite{guan2018said,chou2019every,ji2021learning,rupprecht2017learning,jensen2019improving}). One way is to adopt a label sampling strategy to sample labels randomly from the multi-rater labeling pool during training (\cite{jensen2019improving,jungo2018effect}). The results were observed to be better calibrated than training on the traditional majority vote. Another way is to model each rater individually by multiple decoders in a neural network (\cite{guan2018said,chou2019every}). They showed that the multi-head supervision outperforms that of a unique combined ground-truth. Ji \MakeLowercase{\textit{et al.}} (\cite{ji2021learning}) proposed a calibrated model to estimate the corresponding segmentation masks based on any given multi-rater confidence. The model obtained remarkable performance on various fused labels. However, it still needs users to provide the correct multi-rater confidence in advance, which limits its application.

\section{Theoretical Analysis }\label{sec:assumptions}
In this section, we formally define the problem and give the theoretical premises of MrPrism.

Suppose that there are $M$ raters, and $K$ classes, e.g. {optic disc, optic cup, background} in optic cup/disc segmentation task. Denote by a matrix $z^{m} \in \mathbb{R}^{H\times W \times K}$ the observed label that rater $m$ annotates an item $x$, $H$ and $W$ are the height and width of the item respectively. Let $z^{[M]}$ denotes ${z^{1},z^{2},...,z^{M}}$. The data points $x$ and multi-rater labels $z^{1}, z^{2}...z^{M}$ are assumed to be drawn i.i.d. from random variables $X$ and $Z^{1}, Z^{2}...Z^{M}$. 

Denote by $y$ that the fusion of $z^{[M]}$ by multi-rater confidence maps $w^{[M]}$, which can be expressed as:
\begin{equation}\label{equ:gt} 
    y = softmax (\sum_{m = 1}^{M} w^{m} \cdot z^{m} + p)
\end{equation}
where $\cdot$ represents element-wise multiplication of two matrix, $p$ is the prior probability of $y$ segmented from data point $x$. We softmax the matrix dimension which represents the classes, to make sure the sum of the possibility is one. 

Then, we can formally model the issue as:
\begin{equation}\label{equation:intact}
 \arg min_{\mathcal{W}, \mathcal{Y}} \Vert \mathcal{W} \otimes \mathcal{Z} - \mathcal{Y}\Vert^{2}_{2}+ \eta \space \mathcal{P}_{x},
\end{equation}
where $\mathcal{W}$ denotes the optimization variables w.r.t the multi-rater confidences, $\mathcal{Y}$ denotes the calibrated segmentation ground-truth which can be fully represented by the optimal confidences and prior according to Eqn (\ref{equ:gt}), $\mathcal{Z}$ denotes the observed multi-rater label matrix. \(\otimes\) implies general convolution operation, which we define $\mathcal{W} \otimes \mathcal{Z} = softmax (\sum_{m = 1}^{M} w^{m} \cdot z^{m} + p_{u})$, where $p_{u}$ is the uniform prior (a matrix with constant values), \(\mathcal{P}_{x}\) is the constraint image prior related to the raw image $x$, \(\eta\) is a weighting constant. Our goal is to estimate the confidences $\mathcal{W}$ together with the calibrated mask $\mathcal{Y}$ which minimize the equation given multi-rater labels $\mathcal{Z}$ and raw image $x$.

Directly solving Eqn (\ref{equation:intact}) is hard since both two terms contain the unknown optimization variables. We notice that the problem can be simplified by the iterative half-quadratic optimization (\cite{geman1992constrained}.  By half-quadratic minimization, Eqn.  (\ref{equation:intact}) is equivalent to the iterative optimization of following equations:
\begin{equation}\label{equation:HQ}
\left\{\begin{array}{ll}
\begin{split}
\  \mathcal{W}^{'}_{i}= &\arg min_{\mathcal{W}^{'}_{i}} \frac{\beta}{2}\Vert \mathcal{W}_{i-1} - \mathcal{W}^{'}_{i}\Vert^{2}_{2} 
\\&+ \eta \; \mathcal{P}_{x}(\mathcal{W}^{'}_{i}) & {\raisebox{.5pt}{\textcircled{\raisebox{-.9pt} {1}}}}
\\
\  \mathcal{W}_{i} = &\arg min_{\mathcal{W}_{i}} \frac{\beta}{2} \Vert \mathcal{W}_{i} - \mathcal{W}^{'}_{i}\Vert^{2}_{2} \\&+\frac{1}{2}\Vert \mathcal{W}_{i} \otimes Z - \mathcal{Y}_{i}\Vert^{2}_{2}& {\raisebox{.5pt}{\textcircled{\raisebox{-.9pt} {2}}}}
\end{split}
\\
\end{array}
\right.
\end{equation}
where $\mathcal{W}^{'}_{i}$ is an auxiliary variable introduced to relate two equations, \(i\) is the number of iterations. \(\beta\) is a variable parameter. Eqn.  (\ref{equation:HQ}) can be solved by alternatively solving two sub-problems with increasing \(\beta\). 

Such an observation inspired us to design two sub-models to fit the two sub-problems respectively and run iteratively to solve the final optimization problem. In particular, we design ConP to fit the sub-problem (\ref{equation:HQ})-\(\raisebox{.5pt}{\textcircled{\raisebox{-.9pt} {1}}}\)  and DivP to fit the sub-problem (\ref{equation:HQ})-\(\raisebox{.5pt}{\textcircled{\raisebox{-.9pt} {1}}}\). To conceptually understand the design, we can see the first terms of Eqn.  (\ref{equation:HQ})-\(\raisebox{.5pt}{\textcircled{\raisebox{-.9pt} {1}}}\) and (\ref{equation:HQ})-\(\raisebox{.5pt}{\textcircled{\raisebox{-.9pt} {2}}}\) optimize their current optimizing variables to get enough close to the other's last predicted result. We fulfill this condition in MrPrism by supervising the two sub-models with each other's last predicted result in the training stage. The second term of Eqn.  (\ref{equation:HQ})-\(\raisebox{.5pt}{\textcircled{\raisebox{-.9pt} {1}}}\) is a constraint of image prior. We make it to be self-learned implicitly by inputting the raw images to the neural network based sub-model (ConP). Because of the continuity nature of neural network (similar inputs tend to get similar outputs) (\cite{Zhang2017Learning,dong2018denoising,wu2020integrating}), the network is prone to output the segmentation mask following the raw structure of the image. 


The second term of Eqn.  (\ref{equation:HQ})-\(\raisebox{.5pt}{\textcircled{\raisebox{-.9pt} {2}}}\) encourages the optimal multi-rater confidences $\mathcal{W}_{i}$ given the multi-rater labels $Z$ and correctly fused mask $\mathcal{Y}_{i}$. However, we know neither the optimal multi-rater confidences and the correctly fused mask, which puts it hard for the supervision. To address the issue, we find a relationship between them that allows us to directly supervise the sub-model through the multi-rater labels, which can be represented as: 
\begin{proposition}\label{pro:content1}
The confidence map ${w}^{m}$ can be obtained from the natural logarithm of the estimated probability map of the label $z^{m}$ given the fused mask. That is, ${w}^{m} = {\rm log} \;P (z^{m} | y)$.
\end{proposition}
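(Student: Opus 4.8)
The plan is to recognize the softmax fusion in Eqn.~(\ref{equ:gt}) as the posterior of a standard crowd-sourcing generative model, and then to read off $w^{m}$ by matching factors. First I would rewrite Eqn.~(\ref{equ:gt}) in exponential form: since the softmax only normalizes over the class dimension, $y \propto \exp(p)\prod_{m=1}^{M}\exp(w^{m}\cdot z^{m})$, where the proportionality is pixel-wise and absorbs the per-pixel log-partition. This exhibits the fused mask as a product of an image/data-point factor $\exp(p)$ and one factor per rater.

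Next I would set up the generative model implicit in the Dawid--Skene-style analyses discussed in Section~\ref{sec:ralated works}: the raw image $x$ induces a prior $P(y\mid x)$ over the (soft) label, and, crucially, the raters annotate \emph{conditionally independently} given the underlying label, i.e. $P(z^{[M]}\mid y)=\prod_{m=1}^{M}P(z^{m}\mid y)$. Applying Bayes' theorem then gives, again pixel-wise, $P(y\mid z^{[M]},x)\propto P(y\mid x)\prod_{m=1}^{M}P(z^{m}\mid y)$. Thus the Bayesian posterior has exactly the same product structure as the rewritten softmax fusion.

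I would then identify the fused mask $y$ of Eqn.~(\ref{equ:gt}) with this posterior and match the two product representations term by term: the image factor yields $\exp(p)\leftrightarrow P(y\mid x)$, consistent with the definition of $p$ as the prior of $y$ segmented from $x$, and the $m$-th rater factor yields $\exp(w^{m}\cdot z^{m})\leftrightarrow P(z^{m}\mid y)$. Taking logarithms, and using that $z^{m}$ is the one-hot indicator of rater $m$'s annotation so that $w^{m}\cdot z^{m}$ merely selects the entries of $w^{m}$ at the annotated classes, we obtain $w^{m}=\log P(z^{m}\mid y)$ as full confidence maps. Since the $m$-th head of DivP is precisely trained to estimate $P(z^{m}\mid y)$, its log-output is exactly the confidence map consumed by ConP, which is the claim; as a by-product, this is what licenses supervising DivP directly with the raw multi-rater labels $z^{m}$.

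I expect the main obstacle to be the rigor of this matching step rather than the Bayesian bookkeeping: one must argue that Eqn.~(\ref{equ:gt}) genuinely is the posterior of the conditional-independence model (not merely compatible with it), track the per-pixel normalizers so that the ``$\propto$'' collapses to equality after the softmax, and handle the element-wise product with the one-hot $z^{m}$ so that ``$w^{m}\cdot z^{m}=\log P(z^{m}\mid y)$'' legitimately upgrades to ``$w^{m}=\log P(z^{m}\mid y)$'' as maps --- e.g.\ by noting that $w^{m}$ enters the model only through $w^{m}\cdot z^{m}$, or by defining $P(z^{m}\mid y)$ coordinate-wise. The conditional-independence hypothesis on the raters is the key modelling assumption that makes the per-rater factorization, and hence the clean logarithmic identity, possible.
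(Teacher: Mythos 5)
Your proposal is correct and follows essentially the same route as the paper's proof: both apply Bayes' theorem to the posterior of the latent label, factorize it using the conditional independence of the raters given that label, identify the prior $p$ with $\log P(Y^{*})$ and each rater factor $\exp(w^{m}\cdot z^{m})$ with $P(z^{m}\mid y^{*})$, and recover the softmax via the per-pixel normalization. The paper simply runs the derivation in the reverse direction (starting from $\log P(Y^{*}_{i,j}=\upsilon^{k}\mid Z^{[M]}_{i,j})$ and arriving at the softmax, pixel-wise with one-hot vectors $\upsilon^{k}$), and it handles the map-level identification exactly as you anticipate, by defining $w^{*m}_{i,j}(t)=\log P(Z^{m}_{i,j}=\upsilon^{t}\mid Y^{*}_{i,j}=\upsilon^{k})$ coordinate-wise; the conditional-independence assumption you flag is indeed used (silently) in the paper's step from $\log P(Z^{[M]}\mid Y^{*})$ to $\sum_{m}\log P(Z^{m}\mid Y^{*})$.
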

The proof of Proposition \ref{pro:content1} is put in the supplement materials.

Based on Proposition \ref{pro:content1}, we can supervise DivP though the multi-rater labels $z^{M}$ and take its output as the multi-rater confidence maps. In this way, the constraint in Eqn.  (\ref{equation:HQ})-\(\raisebox{.5pt}{\textcircled{\raisebox{-.9pt} {2}}}\) can be constructed by the supervision of the fused masks, as the shuffle loss $\mathcal{L}_{sff}$ we designed in Section \ref{sec:shuffle}. To put it more clearly, we first 
define two kinds of fused masks according to Proposition \ref{pro:content1}, which we use widely in the later discussion. 
\begin{definition}
\begin{equation}\label{equ:gt2}
    y^{fuse} = \tilde{z}^{[M]} \odot z^{[M]} = softmax (\sum_{m = 1}^{M} {\rm log} \; (\tilde{z}^{m}) \cdot z^{m} + p_{u}),
\end{equation}
\begin{equation}\label{equ:gt2}
    y^{self} = \tilde{z}^{[M]} \odot z^{[M]} = softmax (\sum_{m = 1}^{M} {\rm log} \; (\tilde{z}^{m}) \cdot \mathcal{T}  (\tilde{z}^{[M]}) + p_{u}),
\end{equation}
\end{definition}
where $\mathcal{T}$ denotes the thresholding, $\odot$ denotes the fusion operation. Then we can transfer Eqn. (\ref{equation:HQ}) to a format more convenient for the implementation:
\begin{proposition}\label{pro:2}
Solving Eqn. (\ref{equation:HQ}) is equivalent to solve the following equation:
\begin{equation}\label{equation:HQ-adp}
\left\{\begin{array}{ll}
\begin{split}
\  \mathcal{Y}^{'}_{i}= &\arg min_{\mathcal{Y}^{'}_{i}} \frac{\beta}{2}\Vert \mathcal{Y}^{self}_{i-1} - \mathcal{Y}^{'}_{i}\Vert^{2}_{2} 
\\&+ \eta \; \mathcal{P}_{x}(\mathcal{Y}^{'}_{i}) & {\raisebox{.5pt}{\textcircled{\raisebox{-.9pt} {1}}}}
\\
\  \mathcal{Y}^{self}_{i} = &\arg min_{\mathcal{Y}^{self}_{i}} \frac{\beta + 1}{2} \Vert \mathcal{Y}^{self}_{i} - \mathcal{Y}^{'}_{i}\Vert^{2}_{2} \\&+\frac{1}{2}\Vert \mathcal{Y}^{fuse}_{i} - \mathcal{Y}^{self}_{i}\Vert^{2}_{2}& {\raisebox{.5pt}{\textcircled{\raisebox{-.9pt} {2}}}}
\end{split}
\\
\end{array}
\right.
\end{equation}
\end{proposition}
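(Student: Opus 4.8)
The plan is to produce an explicit change of variables that carries the confidence iteration of Eqn~(\ref{equation:HQ}) onto the mask iteration of Eqn~(\ref{equation:HQ-adp}) term by term, and then to check that minimisers correspond. The dictionary is built from Proposition~\ref{pro:content1}: since $w^m={\rm log}\,P(z^m\mid y)$, the confidence tensor $\mathcal{W}$ is in one-to-one correspondence with the estimated rater-probability maps $\tilde z^{[M]}$ through $w^m={\rm log}\,\tilde z^m$; composing with the fusion operator $\odot$ of the Definition sends $\mathcal{W}$ to the self-fused mask $y^{self}=\tilde z^{[M]}\odot\mathcal{T}(\tilde z^{[M]})$, and the same maps applied to the auxiliary variable send $\mathcal{W}'$ to $\mathcal{Y}'$. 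Two facts then drop straight out of the definitions: $\mathcal{W}\otimes\mathcal{Z}=softmax(\sum_{m}{\rm log}\,\tilde z^m\cdot z^m+p_u)=y^{fuse}$, and the calibrated ground-truth $\mathcal{Y}_i$ — which by construction is the prior-regularised mask — is identified with the output $\mathcal{Y}'_i$ of sub-problem~\raisebox{.5pt}{\textcircled{\raisebox{-.9pt}{1}}}. I would fix this dictionary first, because every later step is substitution into it.

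With the dictionary in hand, sub-problem~\raisebox{.5pt}{\textcircled{\raisebox{-.9pt}{1}}} transforms directly: substituting $\mathcal{W}_{i-1}\mapsto\mathcal{Y}^{self}_{i-1}$ and $\mathcal{W}'_i\mapsto\mathcal{Y}'_i$ and reading the image-prior functional $\mathcal{P}_x$ in the new coordinates turns $\tfrac{\beta}{2}\|\mathcal{W}_{i-1}-\mathcal{W}'_i\|_2^2+\eta\,\mathcal{P}_x(\mathcal{W}'_i)$ into exactly Eqn~(\ref{equation:HQ-adp})-\raisebox{.5pt}{\textcircled{\raisebox{-.9pt}{1}}}. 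Sub-problem~\raisebox{.5pt}{\textcircled{\raisebox{-.9pt}{2}}} needs one short computation: using $\mathcal{W}_i\otimes\mathcal{Z}=\mathcal{Y}^{fuse}_i$ and $\mathcal{Y}_i=\mathcal{Y}'_i$ the data term becomes $\tfrac12\|\mathcal{Y}^{fuse}_i-\mathcal{Y}'_i\|_2^2$, which I would split through $\mathcal{Y}^{self}_i$ by the Pythagorean identity
\[
\|\mathcal{Y}^{fuse}_i-\mathcal{Y}'_i\|_2^2=\|\mathcal{Y}^{fuse}_i-\mathcal{Y}^{self}_i\|_2^2+\|\mathcal{Y}^{self}_i-\mathcal{Y}'_i\|_2^2,
\]
valid because the two residuals are orthogonal (substituting the real labels $z^{[M]}$ for the self-estimates $\mathcal{T}(\tilde z^{[M]})$ perturbs the fusion transversely to the proximal update). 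Adding the proximal term $\tfrac{\beta}{2}\|\mathcal{W}_i-\mathcal{W}'_i\|_2^2=\tfrac{\beta}{2}\|\mathcal{Y}^{self}_i-\mathcal{Y}'_i\|_2^2$ then merges the two copies of $\|\mathcal{Y}^{self}_i-\mathcal{Y}'_i\|_2^2$ into the coefficient $\tfrac{\beta+1}{2}$, reproducing Eqn~(\ref{equation:HQ-adp})-\raisebox{.5pt}{\textcircled{\raisebox{-.9pt}{2}}}; reversing each substitution shows the two systems share the same stationary points.

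The main obstacle is that this change of variables preserves the quadratic geometry only approximately: ${\rm log}$ and $softmax$ are nonlinear, so both the identity $\|\mathcal{W}-\mathcal{W}'\|_2=$ (distance between the corresponding self-fused masks) and the orthogonality used in the Pythagorean split are exact only in a limiting sense. I would therefore not claim equality of the two objective functionals, but instead state Proposition~\ref{pro:2} as an equivalence of the \emph{induced fixed-point equations} — which is all the later construction of ConP and DivP actually uses — and argue that near a fixed point of the recursion the log-softmax fusion map is locally bi-Lipschitz, so its linearisation carries each quadratic proxy of Eqn~(\ref{equation:HQ}) to the corresponding one of Eqn~(\ref{equation:HQ-adp}) up to higher-order terms and a rescaling that the schedule for $\beta$ absorbs. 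Bounding the bi-Lipschitz constant — i.e.\ keeping the Jacobian of the fusion away from degeneracy over the range of $\tilde z^{[M]}$ that actually occurs during the recurrence — is the one genuinely technical point.
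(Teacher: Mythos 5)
Your route is essentially the paper's: push the fusion operator $\otimes\,\mathcal{Z}$ through both sub-problems of Eqn.~(\ref{equation:HQ}), identify $\mathcal{W}'_i\otimes\mathcal{Z}$ with $\mathcal{Y}'_i$ and $\mathcal{W}_i\otimes\mathcal{Z}$ with the fused masks of Definition~1.1, split the data term through $\mathcal{Y}^{self}_i$, and merge the two copies of $\Vert\mathcal{Y}^{self}_i-\mathcal{Y}'_i\Vert_2^2$ into the $\tfrac{\beta+1}{2}$ coefficient. The one place you genuinely diverge is the justification of that split: you invoke an orthogonality ("Pythagorean") identity and the substitution $\mathcal{Y}_i=\mathcal{Y}'_i$, neither of which actually holds — the two residuals are not orthogonal in general — whereas the paper instead writes $\mathcal{Y}_i$ as the self-fusion $\mathcal{Y}^{self}_i$ plus a prior-related compensation that it approximates by $\Vert\mathcal{Y}^{self}_i-\mathcal{Y}'_i\Vert_2^2$; both justifications are heuristic and land on the same expression, so this is a difference of bookkeeping rather than substance. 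Where your write-up is stronger is in flagging explicitly that $\mathcal{W}\mapsto\mathcal{W}\otimes\mathcal{Z}$ is not an isometry (the log/softmax nonlinearity), so the $\ell_2$ proximal terms are only preserved up to a local bi-Lipschitz distortion near a fixed point — the paper silently treats the substitution as exact, and your proposed restatement as an equivalence of the induced fixed-point iterations is the honest form of the claim.
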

Likewise, $\mathcal{Y}^{'}$ is an auxiliary variable to relate two sub-equations. The proof of Proposition \ref{pro:2} is provided in the supplement materials.

In Eqn. (\ref{equation:HQ-adp}), we reorganize all the variables to the same form (multi-masks fusions), which can largely facilitate our practical implementation. It not only allows us to supervise the two sub-models in a more consistent way but also allows ConP to only produce the fusions with the confidences and image prior implicitly learned. 

In the practice, we train an iteratively optimized framework consists of ConP and DivP, called MrPrism. In MrPrism, ConP is designed as a segmentation decoder that predicts the fusion $\mathcal{Y}^{'}$ based on the raw image $x$. The network is supervised by $\mathcal{Y}^{self}$ produced by DivP in the last iteration (the first term in (\ref{equation:HQ})-\(\raisebox{.5pt}{\textcircled{\raisebox{-.9pt} {1}}}\)). The raw image prior is implicitly learned from the inputted raw image (the second term in (\ref{equation:HQ})-\(\raisebox{.5pt}{\textcircled{\raisebox{-.9pt} {1}}}\)). DivP is designed as a multi-rater masks predictor that estimates the multi-rater confidences $\mathcal{W}$ based on the segmentation masks predicted by ConP $\mathcal{Y}^{'}$ in the last iteration. Instead of the direct supervision on the confidences, we supervise the self-fusion of the confidences, i.e., $\mathcal{Y}^{self}$ with the last fusion estimated by ConP $\mathcal{Y}^{'}$ (the first term of (\ref{equation:HQ})-\(\raisebox{.5pt}{\textcircled{\raisebox{-.9pt} {2}}}\)) and the fusion with multi-rater labels, i.e., $\mathcal{Y}^{fuse}$ (the second term of (\ref{equation:HQ})-\(\raisebox{.5pt}{\textcircled{\raisebox{-.9pt} {2}}}\)). Such an iterative optimization process will converge to the consistent agreement between ConP and DivP constraint by the raw image structural prior according to the half-quadratic algorithm.

\section{Methodology}
\begin{figure*}[!t]
\centering
\includegraphics[width=\textwidth]{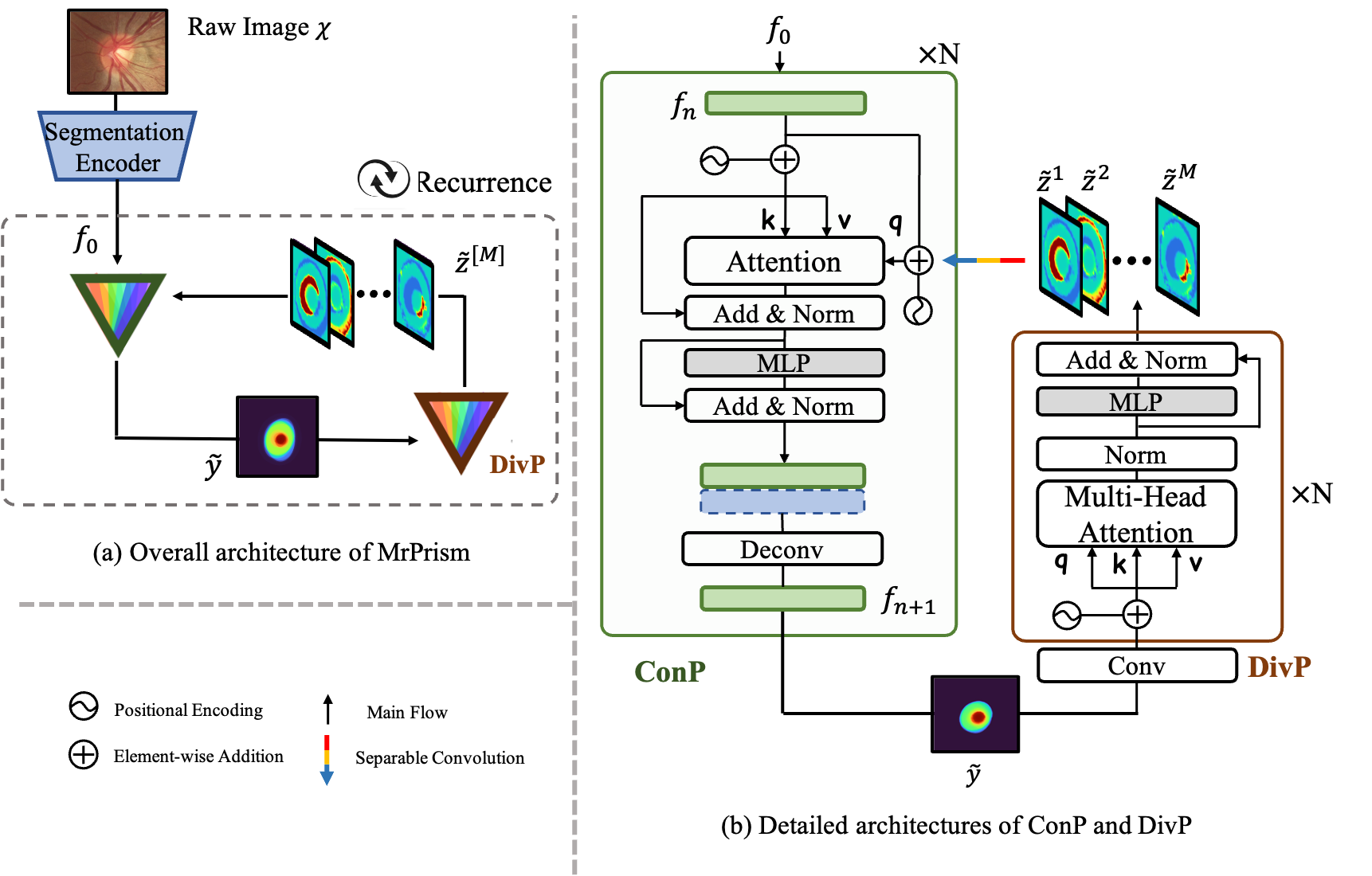}
\caption{Overall architecture of MrPrism. Green denotes ConP modules. Orange denotes DivP modules.}
\label{fig:overall}
\end{figure*}
The overall flow of the proposed method is shown in Fig. \ref{fig:overall}. Raw image $x$ is first sent into a CNN-based encoder to obtain a deep embedding $f$. Then ConP utilizes $f$ and given confidences $\tilde{w}^{[M]}$ to estimate the calibrated segmentation mask $\tilde{y}$. 
DivP will then separate $\tilde{y}$ to multi-rater segmentation masks. Each head of DivP estimates the corresponding rater annotation. The estimated probability maps can represent the multi-rater confidence maps as shown in Proposition \ref{pro:1}. Multi-rater confidence maps will then be embedded and sent to ConP for the calibration in the next iteration. ConP and DivP will run recurrently until converge. 
Since the paper cannot cover all the details of the network design, we introduce the implementation of ConP and DivP modules from the methodology level in the following paragraphs. The detailed network structures can be found in our released code.

\subsection{ConP}
We propose ConP to estimate calibrated segmentation masks based on estimated multi-rater confidences. The basic structure is shown in Fig. \ref{fig:overall}  (b). The input of ConP is the raw image embedding $f_{0}$, and the output is the segmentation mask $\tilde{y}$. Multi-rater confidence maps $\tilde{w}^{[M]}$ estimated by DivP are integrated into ConP through attentive mechanism to calibrate the segmentation. In ConP, attention is inserted into each two of the deconvolution layers. It takes embedded multi-rater confidence as \textit{query}, segmentation features as \textit{key} and \textit{value}. In this way, the segmentation features can be selected and enhanced based on the given multi-rate confidence maps. 

Specifically, consider ConP at the $n^{th}$ layer, the segmentation feature is $f_{n} \in \mathbb{R}^{\frac{H}{r_{n}} \times \frac{W}{r_{n}} \times C_{n}}$, where $ (\frac{H}{r_{n}},\frac{W}{r_{n}})$ is the resolution of the feature map, $C_{n}$ is the number of channels. The embedding of multi-rater confidence maps is $\tilde{w}^{e}_{n} \in \mathbb{R}^{\frac{H}{r_{n}} \times \frac{W}{r_{n}} \times C_{n}}$.Then $f_{n}$ is transferred by:
\begin{equation}\label{equation:attconv}
     \bar{f}_{n} = {\rm Attention} (q, k, v) = {\rm Attention} (\tilde{w}^{e}_{n} + E_{w},f_{n} + E_{f}, f_{n}).
\end{equation}
where ${\rm Attention} (query, key, value)$ denotes attention mechanism, $E_{w}, E_{f}$ are positional encodings (\cite{carion2020end} for confidence embedding and segmentation feature map respectively. Following  (\cite{dosovitskiy2020image}, we reshape the feature maps into a sequence of flattened patches before the attention. Similarly, $\bar{f}_{n}$ will be reshaped back to $\mathbb{R}^{\frac{H}{r_{n}} \times \frac{W}{r_{n}} \times C_{n}}$ after the attention. 



The results will then pass through a multi-layer perceptron  (MLP) layer to further reinforce the targeted features. The residual connection (\cite{resnet}, followed by layer normalization (\cite{ba2016layer} are employed before and after MLP layer to facilitate the training. MLP layer consists of two linear mappings that keep the dimension of the input. Then a standard UNet (\cite{ronneberger2015u} operation is adopted to up-sample the feature. It will be first concatenating with the corresponding encoder feature and then applying deconvolution layer to obtain $f_{n+1} \in \mathbb{R}^{\frac{H \times 2}{r_{n}} \times \frac{W \times 2}{r_{n}} \times \frac{C_{n}}{2}}$. The blocks are stacked in ConP to achieve the final output $\tilde{y} \in \mathbb{R}^{H \times W \times K}$.

\subsection{DivP}
DivP estimates the segmentation mask of each rater from the calibrated mask of ConP. The input of DivP is the estimated segmentation mask $\tilde{y}$. The output of DivP is multi-rater segmentation masks $\tilde{z}^{[M]}$. DivP is implemented by stacked multi-head attention blocks. The final multi-head attention block has $M$ heads, where each head estimates one rater's segmentation annotation.

Consider the first block in the stack.
The multi-head self-attention $ {\rm MHA} (\tilde{y} + E_{y},\tilde{y} + E_{y}, \tilde{y})$ is first applied on $\tilde{y} \in \mathbb{R}^{H \times W \times K}$ estimated by ConP.
Like which in ConP, ${\rm MHA} (query, key, value)$ denotes multi-head attention mechanism, $E_{y}$ are positional encodings. We reshape the feature maps into a sequence of flattened patches before the attention. Different from ConP, we adopt self-attention strategy in DivP, which means $query$, $key$ and $value$ are set as the same when calculating dot-product attention. 
Residual connection, normalization and MLP layer are applied following each head to facilitate training. We stack four of such blocks in DivP. The final estimated multi-rater probability maps are used to calculate the estimated confidences $\tilde{w}^{[M]}$. 

In order to integrate the confidences $\tilde{w}^{[M]}$ into ConP attentive blocks, we use separable convolution (\cite{chollet2017xception} to embed these maps to the same size as the target segmentation features in ConP. Separable convolution contains a pair of point-wise convolution and depth-wise convolution for the embedding. Point-wise convolution keeps the scale of the maps but deepen the channels, while depth-wise convolution downsamples the features but keeps the channel number. These layers can not only reshape the maps but can also transfer the maps to the deep features for the integration.

\subsection{Supervision}
\begin{figure}[!t]
\centering
\includegraphics[width=0.8\textwidth]{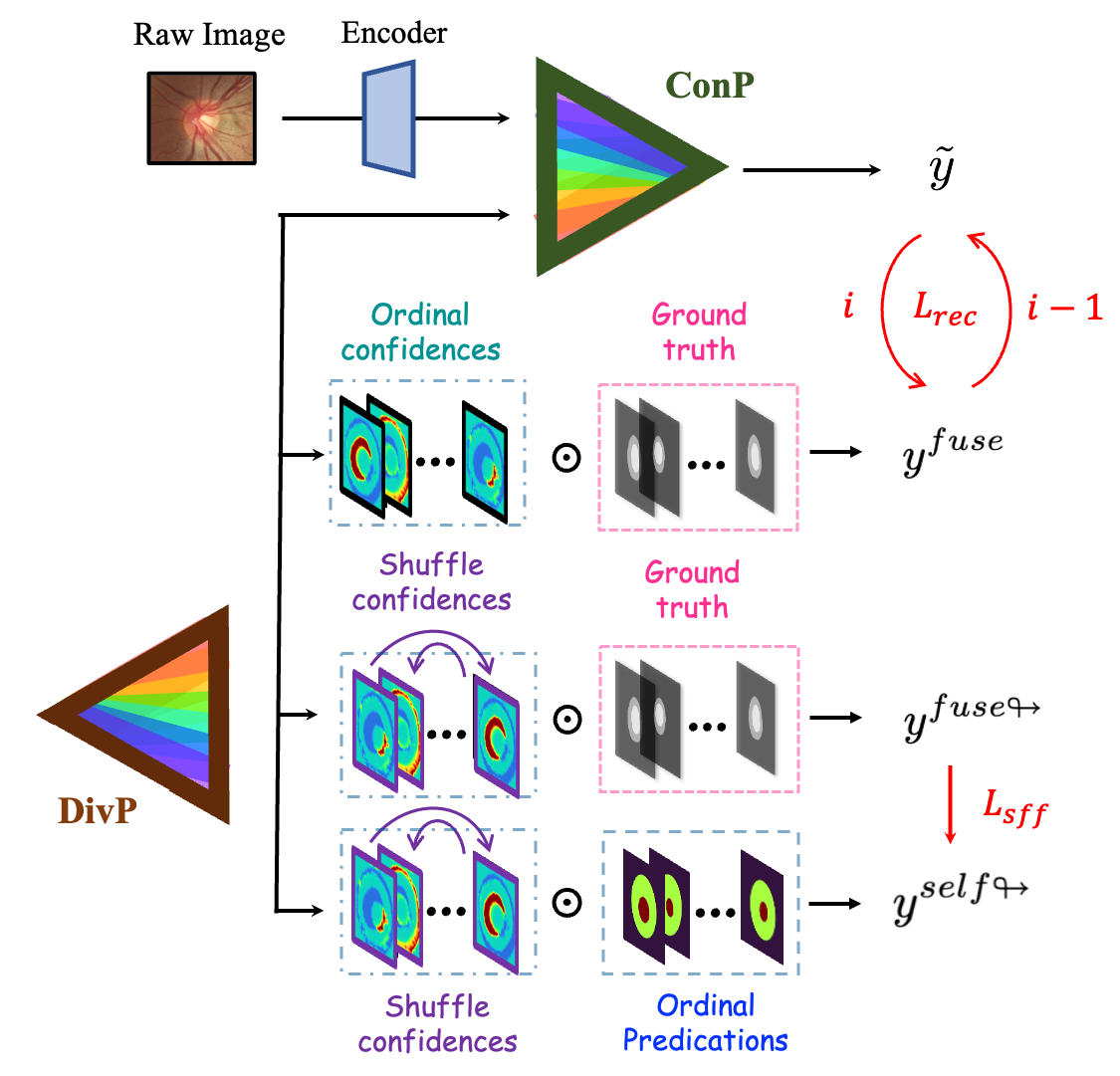}
\caption{Supervision of MrPrism. The overall supervision consists of recurrence loss $\mathcal{L}_{rec}$ and shuffle loss $\mathcal{L}_{sff}$.}
\label{fig:supervision}
\end{figure}


\subsubsection{Recurrence Supervision}

Based on the half-quadratic optimization, the two sub-problems are partially constrained by the intermediate results of the other in the recurrence.
Following this general guidance, we design the recurrence loss for the supervision of MrPrism. In recurrence loss, ConP and DivP are supervised by the other's previous predictions. Specifically, consider the supervision of the $i^{th}$ recurrence. ConP is supervised by the self-fusion label $y^{self}$  (shown in Eqn.  (\ref{equ:gt})) calculated from DivP. It constraints ConP to predict the calibrated segmentation following the given multi-rater confidences. DivP is supervised by the calibrated segmentation mask $\tilde{y}$ provided by ConP. The supervision is imposed on the self-fusion label $y^{self}$ of DivP. It constraints DivP to predict the correct multi-rater confidences so that the self-fusion mask can reconstruct the provided calibrated segmentation. An illustration of the loss functions is shown in Fig. \ref{fig:supervision}.
Formally, $\mathcal{L}_{rec}$ for the $i^{th}$ recurrence is represented as:
\begin{equation}\label{equation:la}
\begin{split}
     \mathcal{L}_{rec}^{i} & = L_{ssim}  (\tilde{y}_{i}, y^{self}_{i-1}) + L_{ssim} (y^{self}_{i}, \tilde{y}_{i}),
\end{split}
\end{equation}
where $L_{ssim}$ is SSIM loss function. The first term is the supervision of ConP and the second term is that of DivP. Note that ConP is supposed to learn the raw image prior $p$ that cannot be supervised. Therefore, instead of pixel-level supervision, we adopt Structural Similarity Index  (SSIM) as the loss function. SSIM constraints the estimated mask to have a similar structure to the self-fusion label but allows the slight difference caused by the raw image prior. 

\begin{algorithm}[h]
\caption{MrPrism Algorithm}\label{alg:mrprism}
\begin{algorithmic}[1]
\State Given the segmentation encoder with the parameters $\theta_{e}$, ConP with the parameters $\theta_{c}$ and DivP with the parameters $\theta_{d}$. 

\State Let $\lambda$ denotes the learning rate, $\zeta$ denotes the weight of shuffle loss, $\tau$ denotes the times of recurrence. $\mathcal{T}_{0.5}$ denotes the threshold of 0.5, $L_{ssim}$ and $L_{ce}$ denote SSIM and cross-entropy loss functions, respectively.

\While{$Training$}
    \State Sample $ (x, z^{[M]})$ from training dataset
    \State Initialize $\tilde{w}_{0}^{[M]}$ following Eqn.  (\ref{equation:MCA})
    \State Initialize $y_{0}^{fuse}$ following Eqn.  (\ref{equation:MCA2})
    \State $f_{0} \gets {\rm Encoder} (x)$
    \For{$i$ in range 1 to $\tau$}  \Comment{Recurrence process}
    \State $\tilde{y}_{i} \gets {\rm ConP} (f_{0}, w_{i-1}^{[M]})$
    \State $\tilde{z}_{i}^{[M]} \gets {\rm DivP} (\tilde{y}_{i})$
    \State $\tilde{w}_{i}^{[M]} \gets {\rm log} \; \tilde{z}_{i}^{[M]}$
    \State $y^{fuse}_{i} \gets \tilde{z}_{i}^{[M]} \odot z^{[M]} $
    \State Shuffle $\tilde{z}_{i}^{[M]}$ to $\tilde{z}_{i}^{[M] \looparrowright}$
    \State  $y^{fuse\looparrowright}_{i} \gets \tilde{z}^{[M] \looparrowright}_{i} \odot z^{[M]} $
    
    \State $y_{i}^{self\looparrowright} \gets \tilde{z}_{i}^{[M] \looparrowright} \odot \mathcal{T}_{0.5}  (\tilde{z}^{[M]})$
    
    \State $\mathcal{L}_{rec}^{i} \gets L_{ssim} (\tilde{y}_{i}, y^{fuse}_{i-1}) + L_{ssim} (y^{fuse}_{i}, \tilde{y}_{i})$
    \State $\mathcal{L}_{sff}^{i} \gets L_{ce}  (y^{self\looparrowright}_{i},y^{fuse\looparrowright}_{i})$
    \EndFor
    \State $\theta_{e} \gets \theta_{e} + \lambda \sum_{i=1}^{\tau} \frac{\partial \mathcal{L}_{rec}^{i}}{\partial \theta_{e}}$
    \State $\theta_{c} \gets \theta_{c} + \lambda \sum_{i=1}^{\tau} \frac{\partial \mathcal{L}_{rec}^{i}}{\partial \theta_{c}}$
    \State $\theta_{d} \gets \theta_{d} + \lambda \sum_{i=1}^{\tau}  (\frac{\partial \mathcal{L}_{rec}^{i}}{\partial \theta_{d}} + \zeta  \frac{\partial \mathcal{L}_{sff}^{i}}{\partial \theta_{d}})$
\EndWhile

\end{algorithmic}
\end{algorithm}

\subsubsection{Shuffle the Multi-head Supervision}\label{sec:shuffle}
According to Eqn.  (\ref{equation:HQ})-\(\raisebox{.5pt}{\textcircled{\raisebox{-.9pt} {1}}}\), besides the recurrence loss, DivP is also supervised by the multi-rater labels. A basic implementation is to supervise each head of DivP by the corresponding multi-rater label, i.e., multi-head  (MH) loss function, which can be represented as:
\begin{equation}\label{equation:mh}
    \mathcal{L}_{MH} = \sum_{m = 1}^{M} L_{ce}  (\tilde{z}^{m},z^{m}),
\end{equation}
where $L_{ce}$ denotes the cross-entropy loss function. However, since the multi-rater confidence and segmentation are derived from the same estimation, the recurrence is easy to fall into the trivial solution. To avoid it, we disentangle the multi-rater confidences and multi-rater segmentation masks by the shuffling. Since under any multi-rater confidences, the fusion of ground-truth labels $z^{[M]}$ and predicated labels $\tilde{z}^{[M]}$ should be the same, thus the shuffled self-fusions of them can be used for the supervision. An illustration of the shuffle supervision is shown in Fig. \ref{fig:supervision}. Specifically, the shuffle supervision of DivP can be represented as:
\begin{equation}\label{equation:sff}
    \mathcal{L}_{sff} = L_{ce}  (y^{self\looparrowright},y^{fuse\looparrowright}),
\end{equation}
where,\\
\begin{align}\label{eqn:sffself}
    y^{fuse\looparrowright} &= \tilde{z}^{[M] \looparrowright}_{i} \odot z^{[M]} \\
    & = softmax (\sum_{m = 1}^{M} log (\tilde{z}^{m_{\looparrowright}}) \cdot z^{m} + p_{u}), \nonumber
\end{align}
and,
\begin{align}\label{eqn:sffselfes}
    y^{self\looparrowright} & = \tilde{z}_{i}^{[M] \looparrowright} \odot \mathcal{T}_{0.5}  (\tilde{z}^{[M]})\\
    & = softmax (\sum_{m = 1}^{M} log (\tilde{z}^{m_{\looparrowright}}) \cdot \mathcal{T}_{0.5}  (\tilde{z}^{m}) + p_{u}). \nonumber
\end{align}
In Eqn.  (\ref{equation:sff}), $L_{ce}$ denotes the cross-entropy loss function.In Eqn.  (\ref{eqn:sffselfes}), $\mathcal{T}_{0.5}$ denotes the threshold of 0.5.
In practice, we empirically shuffle three times in each recurrence for the supervision.

Another interesting finding is that the ablation study shows $\mathcal{L}_{MH}$ is no longer needed after applying shuffle supervision: the overall performance was even improved without $\mathcal{L}_{MH}$. A possible explanation is that estimating multi-rater labels from the calibrated segmentation mask is a multi-solution problem. Many different separations are possible for one calibrated mask. Thus the supervision imposed on each of the heads may take extra constraints in the training, and thus negatively affects the performance.


\subsubsection{Overall Supervision}
Consider ConP and DivP run once each, i.e, from $f_{0}$ to $\tilde{z}^{[M]}$, as one recurrence. Each instance will run $\tau$ recurrences in a single epoch. We backforward the gradients of the model after $\tau$ times of recurrence. The total loss function is represented as:
\begin{equation}\label{equation:totalloss}
    \mathcal{L}_{total} = \sum_{i = 1}^{\tau} \mathcal{L}_{rec}^{i} + \zeta \; \mathcal{L}_{sff}^{i},
\end{equation}
where $\zeta$ is the weight hyper-parameter. The gradients are back-forward individually in each recurrence, which means the gradients of the $i^{th}$ recurrence will not affect the $ (i-1)^{th}$ recurrence.

\section{Experiment}
\subsection{Datasets}
Extensive experiments are conducted to verify the effectiveness of the proposed framework on five different types of medical segmentation tasks with data from varied image modalities, including color fundus images, CT and MRI.

\textbf{OD/OC Segmentation} The experiments of OD/OC segmentation from fundus images are conducted on two public released benchmarks, REFUGE and RIGA. REFUGE (\cite{orlando2020refuge} is a publicly available dataset for optic cup and disc (OD/OC) segmentation and glaucoma classification, which contains in total 1200 color fundus images
Seven glaucoma experts from different organizations first labeled the optic cup and disc contour masks manually, and a senior expert with the seven graders together arbitrated the final ground-truths for the validation and testing. 
RIGA benchmark (\cite{almazroa2017agreement} is a publicly available dataset for OD/OC segmentation, which contains in total 750 color fundus images. Six glaucoma experts from different organizations labeled the optic cup and disc contour masks manually. We select 655 samples in it for the training and 95 heterologous samples for the testing. A senior glaucoma expert with more than ten years' experience was invited to arbitrate the annotations of the test set.

\textbf{Brain-Tumor Segmentation} The brain-tumor segmentation from MRI images are conducted on a blend dataset of QU-BraTS 2020(\cite{mehta2021qu} and QUBIQ-BrainTumor(\cite{qubiq} subset. The dataset contains 369 cases for training, 153 cases for validation and 170 cases for testing. The training set and test set of QUBIQ are added to the validation set and test set respectively. The cases are annotated by two to four raters and arbitrated by two senior clinical experts. 

\textbf{Prostate Segmentation} The prostate segmentation from MRI images are conducted on QUBIQ-prostate(\cite{qubiq} subset. The prostate segmentation contains two subtasks. The dataset contains 33 cases for training and 15 cases for testing. Seven different raters annotated the images. The test cases are arbitrated by a senior clinical expert. 

\textbf{Brain-Growth Segmentation} The brain-growth segmentation from MRI images are conducted on QUBIQ-BrainGrowth(\cite{qubiq} subset. The dataset contains 34 cases for training and 5 cases for testing. Seven different raters annotated the images. The test cases are arbitrated by a senior clinical expert.

\textbf{Kidney Segmentation} The kidney segmentation from CT images are conducted on QUBIQ-kidney(\cite{qubiq} subset. The dataset contains 20 cases for training and 4 cases for testing. Three different raters annotated the images. The test cases are arbitrated by a senior clinical expert.

\begin{table*}[h]
\caption{Quantitative comparison between MrPrism and SOTA multi-rater learning strategies. Rec $i$ indicates the results of the $i^{th}$ recurrence.}
\centering
\resizebox{\columnwidth}{!}{
\begin{tabular}{c|ccccccccc}
\hline
                            & \multicolumn{2}{c}{OD/OC  (REFUGE)}          & \multicolumn{2}{c}{OD/OC  (RIGA)}            & Brain-Growth         & Brain-Tumor          & Prostate1             &
                        Prostate2           &    
                            Kidney               \\ \cline{2-10} 
                            & $\mathcal{D}_{disc}$                   & $\mathcal{D}_{cup}$                   & $\mathcal{D}_{disc}$                   & $\mathcal{D}_{cup}$                   & $\mathcal{D}_{brain}$                   & $\mathcal{D}_{tumor}$                   & $\mathcal{D}_{pros1}$                   & $\mathcal{D}_{pros2}$  & $\mathcal{D}_{kidney}$                   \\ \hline
WDNet (\cite{guan2018said}                         & 95.72                & 84.16                & 96.43                & 81.55                & 83.13                & 84.22                & 84.62                & 73.65 & 72.50                \\
CL (\cite{rodrigues2018deep}                        & 95.31                & 83.41                & 95.49                & 81.11                & 81.56                & 77.34                & 85.59                & 73.48  & 74.36                \\
AggNet (\cite{albarqouni2016aggnet}                      & 94.96                & 83.66                & 95.38                & 82.27                & 82.72                & 83.37                & 78.33                & 71.77 & 71.39                \\
CM (\cite{tanno2019learning}                          & 94.50                & 84.72                & 96.56                & 83.11                & 80.56                & 79.21                & 85.91                & 74.55 & 74.85                \\
MaxMig (\cite{cao2019max}                     & 95.72                & 84.45                & 96.64                & 84.81                & 84.62                & 86.41                & 86.27                & 74.06 & 74.68                \\
MRNet (\cite{ji2021learning}                      &  94.75                   &      85.63                 &        96.27              & 85.82                     &    83.67                  &  87.63                    & 87.04                      & 75.43 & 75.19                     \\
Diag (\cite{wu2022opinions}                        & 95.17                     &   86.23                   &      96.74                &    86.17                  & -                    & -                     & -                     & -  & -                    \\ \hline
MrPrism-Rec0                        & 94.62                     &   84.33                   &      96.18                &    85.28                  &   83.53                   &  87.81                    & 86.17                     & 75.20 & 75.24                     \\
MrPrism-Rec1                        & 95.36                     &   87.59                   &      96.37                &    87.42                  &   84.25                   & 88.23                     & 87.35                     & 76.88 & 76.22                     \\
MrPrism-Rec2                        & 95.67                     &   \textbf{88.74}                   &      \textbf{96.66}                &    88.15                  &   85.50                   &  \textbf{89.58}                    &  \textbf{88.45}                    &  77.29 & \textbf{76.51}                    \\
\multicolumn{1}{c|}{MrPrism-Rec3} & \multicolumn{1}{c}{\textbf{95.68}} & \multicolumn{1}{c}{88.53} & \multicolumn{1}{c}{96.62} & \multicolumn{1}{c}{\textbf{88.30}} & \multicolumn{1}{c}{\textbf{85.52}} & \multicolumn{1}{c}{89.44} & \multicolumn{1}{c}{88.37} & \multicolumn{1}{c}{\textbf{77.34}} & \multicolumn{1}{c}{76.48}\\ \hline
\end{tabular}}\label{tab:overall}
\vspace{-8pt}
\end{table*}

\begin{figure}[h]
\centering
\includegraphics[width=0.8\textwidth]{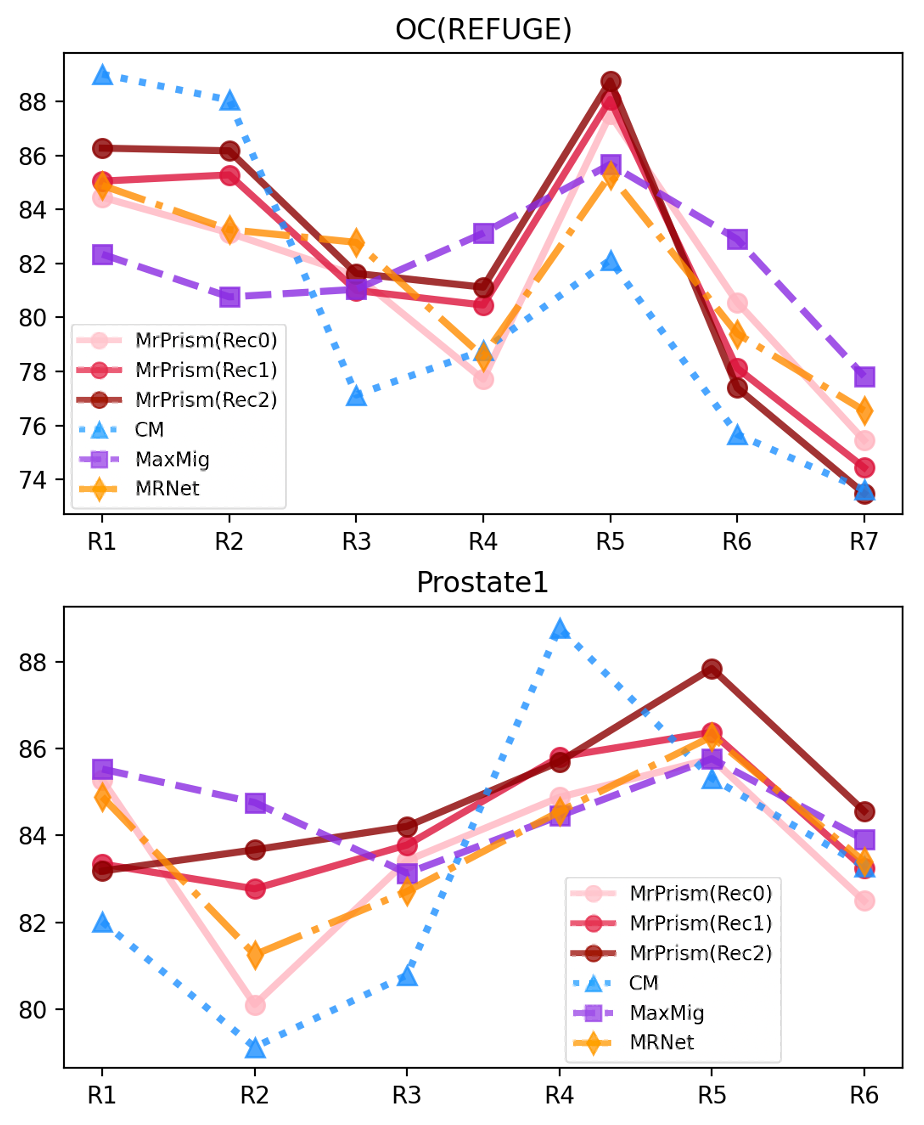}
\caption{The comparison of different methods on each of the multi-rater labels measured by dice coefficient (\%). R $i$ denotes the annotation of the rater $i$.}
\label{fig:exprec}
\vspace{-20pt}
\end{figure}

\subsection{Experimental Setup}
\subsubsection{Implement Details}

All the experiments are implemented with the PyTorch platform and trained/tested on 4 Tesla P40 GPU with 24GB of memory. All training and test images are uniformly resized to the dimension of 256$\times$256 pixels. ConP is stacked with five stages with the patch sizes 8, 7, 7, 7, 7 in the attentive mechanism. DivP is stacked with four stages with patch sizes 7, 7, 7, 7, and 4, 4, 4, $M$ heads in the multi-head attention mechanism. The weight constant of shuffle loss $\zeta$ is set as 0.3. Generally, each instance will run three times of recurrence in the training and inference stages. The proposed model is trained using Adam optimizer  (\cite{kingma2014adam} for 150 epochs. The learning rate is always set to 1 $\times 10^{-4}$. 
We initialize the recurrence by assigning the equal confidence to each rater. Formally, we set initial $w^{m}_{0}$ as:
\begin{equation}\label{equation:MCA}
    w^{m}_{0} = \rm{log} \; [\psi ^{m} + \varepsilon^{m}]_{0}^{1}
\end{equation}
where $\psi^{m}$ decide the average of atlas $w^{m}_{0}$, $\varepsilon^{m}$ adds different perturbation on each element. $[\cdot]_{0}^{1}$ denotes the value is clipped to range of 0 to 1. For the given $w^{m}$, the ground truth is generated by 
\begin{equation}\label{equation:MCA2}
y_{0}^{self} = softmax (\sum_{m = 1}^{M} w_{0}^{m} \cdot z^{m}_{n} + p_{u}).
\end{equation}
We sample $\psi ^{[M]}$ from uniform distribution, $\psi ^{[M]} \thicksim U (a,b)$. $\varepsilon^{m}$ is sampled from a normal distribution, $\varepsilon^{m} \thicksim \mathcal{N} (\mu, \sigma^2)$, where $a,b,\mu,\sigma$ are set as 0.1, 0.9, 0, 0.2 respectively. The confidence maps will be set as the same for each batch of the data. 

\subsubsection{Evaluation Metric}
The segmentation performance is evaluated by soft dice coefficient  ($\mathcal{D}$) through multiple threshold level, set as  (0.1, 0.3, 0.5, 0.7, 0.9). At each threshold level, the predicted probability map and soft ground-truth are binarized with the given threshold, and then the dice metric  (\cite{milletari2016v} is computed. $\mathcal{D}$ scores are obtained as the averages of multiple thresholds.

\vspace{-8pt}
\subsection{Experiment Results}
\subsubsection{Overall Performance}
To verify the self-calibrated segmentation performance of the proposed model, we compare MrPrism with SOTA multi-rater learning methods. The selected methods include WDNet (\cite{guan2018said}, CL (\cite{rodrigues2018deep}, CM  (\cite{tanno2019learning}, AggNet (\cite{albarqouni2016aggnet}, MaxMig (\cite{cao2019max}, MRNet (\cite{ji2021learning} and Diag  (\cite{wu2022opinions}. To be specific, CL, CM, AggNet, MaxMig, and Diag jointly learn the prediction and multi-rater confidences assignment, in which CL, CM, and MaxMig implicitly learn the confidences while AggNet and Diag explicitly learn the confidences. 
WDNet and MRNet are calibrated segmentation methods that need the rater confidence provided. Since the multi-rater confidences are not available in our scenario, all raters are considered equal when applying these methods. Diag is a method that uses the disease diagnosis/classification performance as a standard to evaluate the confidences of multi-rater segmentation labels. In OD/OC segmentation, we use glaucoma diagnosis as the standard. In the other segmentation tasks, no diagnosis task is associated, so the results of Diag are not reported. The experiments are conducted on a wide range of medical segmentation tasks, including OD/OC segmentation, infant brain growth segmentation, brain tumor segmentation, prostate segmentation, and kidney segmentation. In MrPrism, the predictions of ConP in four recurrences are used for the comparison, which are denoted as Rec0, Rec1, Rec2 and Rec3. The detailed quantitative results are shown in Table \ref{tab:overall}. 

\begin{figure*}[h]
\centering
\includegraphics[width=0.85\textwidth]{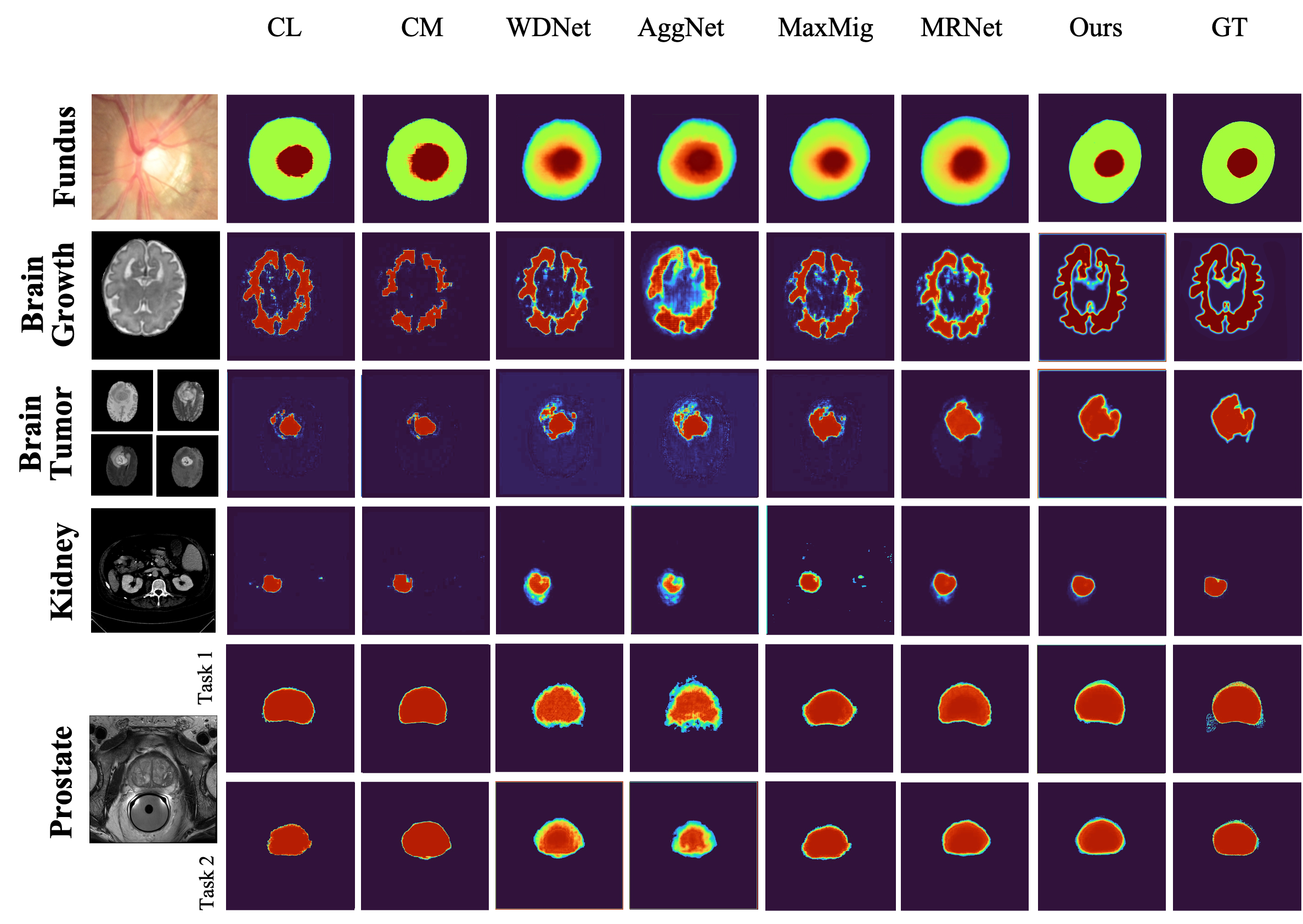}
\caption{Segmentation results of different strategies for five different medical segmentation tasks.}
\label{fig:expvis}
\end{figure*}

\begin{figure*}[h]
\centering
\includegraphics[width=0.85\textwidth]{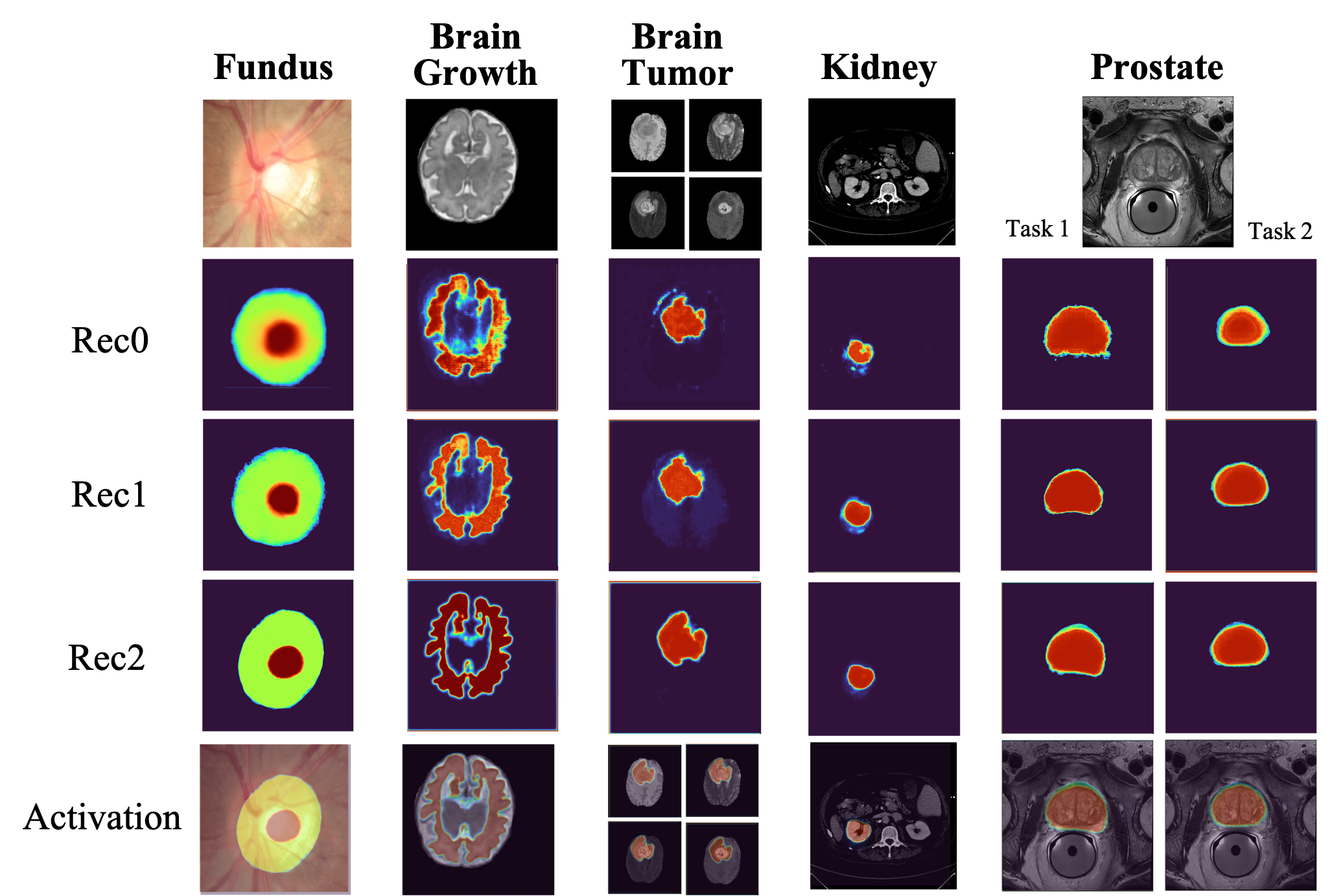}
\caption{Segmentation results of different strategies for five different medical segmentation tasks.}
\label{fig:recvis}
\end{figure*}

As listed in Table \ref{tab:overall}, the proposed MrPrism achieves superior performance on all tasks compared with SOTA multi-rater learning methods. These quantitative results verify that the self-calibrated segmentation results produced by MrPrism can represent the inter-rater agreement more accurately than the other methods. In addition, it is worth noting that the proposed MrPrism outperforms the other methods by a large margin when the inter-rater variety is large, e.g., on optic cup and brain-tumor. 
The superior performance of MrPrism on these tasks demonstrates the effectiveness of the proposed framework, which is tailored for medical image segmentation with ambiguous annotations, by taking advantage of the iterative self-calibrated segmentation.

Comparing the results in different recurrences of MrPrism in Table \ref{tab:overall}, we can see the performance of MrPrism keeps increasing on most of the metrics. The improvement is significant in the earlier recurrences and gradually diminishes in the later ones. 
The performance becomes stable in the fourth recurrence, indicating the results have converged. In practice, we generally recurrent three times for efficiency. It is also worth noting that even though the large inter-rater variety initially causes a lower performance, MrPrism can also substantially calibrate the results through the recurrence and gain high performance. For example, on optic cup segmentation which the inter-rater variety is large, MrPrism gets 85.28\% on RIGA dataset in the first recurrence but increases 3.02\% in four recurrences, and finally gains a high 88.30\%. These experimental results demonstrate that MrPrism is able to calibrate the results by itself through the recurrence.

We show a visualized comparison of the methods in Fig. \ref{fig:expvis}. We can see CL and CM, which implicitly learn the multi-rater confidences, are inclined to be overconfident in the inaccurate results, while AggNet and MaxMig, which explicitly learns the multi-rater confidences are prone to obtain ambiguous results. WDNet and MRNet need the correct multi-rater confidences to be provided, thus to be ambiguous under the default majority vote assumption. As a comparison, the proposed self-calibrated segmentation can estimate the result from uncertain to confident through the recurrences, finally achieving a confident and calibrated result.


In order to further explore the discrepancies of these methods, we compare the performance of the methods on multi-rater annotations respectively. The results are shown in Fig. \ref{fig:exprec}. The experiment is conducted on two different segmentation tasks: optic cup segmentation  (OC-REFUGE), in which the inter-observer variety is large, and prostate segmentation  (Prostate1), in which the inter-observer variety is relatively small. Several representative multi-rater learning strategies are selected for the comparison, including the three recurrences of MrPrism, CM, MaxMig, and MRNet. Compared with these methods, we can see CM, which implicitly learns the multi-rater confidences is inclined to believe few dominant raters, e.g., R1 and R2 in OC-REFUGE or R4 in Prostate1. However, without the dynamic calibration, the dominant raters are often incorrect, e.g., the credible rater in OC-REFUGE is actually R5 but not R1 or R2. It leads to its confident but incorrect predictions. MaxMig, which explicitly learns the multi-rater confidences is prone to take a balance between the multiple raters. However, this strategy does not work well when the inter-rater variety is large, e.g., MaxMig gains only 84.45\% on OC-REFUGE, while MrPrism gains 88.74\%. MRNet requires the multi-rater confidences provided for the prediction. Under the default majority vote setting, its performance is close to the uncalibrated MrPrism-Rec0. Different from the existed strategies, MrPrism shows to be more adaptable to different tasks. On OC-REFUGE where few raters are much better than the others, MrPrism keeps increasing on some of the raters in the recurrences, e.g., R1, R2, R4, R5, while dropping on the other rater, e.g., R6 R7. It indicates that MrPrsim is able to discriminate the better raters from the others in the recurrences under the constraint of the raw image prior. While on Prostate1 where most raters are equally correct, MrPrism raises the overall performance of most raters in the recurrences but will not be overly biased like CM.

\begin{table*}[h]
\centering
\setlength\arrayrulewidth{0.1pt}
\caption{Quantitative comparison between ConP and SOTA calibrated/non-calibrated segmentation methods.}
\centering
\resizebox{\columnwidth}{!}{
\begin{tabular}{c|c|cc|cc|c|c|c|c|c}
\toprule
     &  & \multicolumn{2}{c|}{OD/OC (REFUGE)} & \multicolumn{2}{c|}{OD/OC (RIGA)} & \multicolumn{1}{c|}{Brain-Growth} & \multicolumn{1}{c|}{Brain-Tumor} & \multicolumn{1}{c|}{Prostate1} &
     \multicolumn{1}{c|}{Prostate2} &
     \multicolumn{1}{c}{Kidney}\\ \hline
      &     & $\mathcal{D}_{disc}$     & $\mathcal{D}_{cup}$      & $\mathcal{D}_{disc}$       & $\mathcal{D}_{cup}$          & $\mathcal{D}_{brain}$                   & $\mathcal{D}_{tumor}$                   & $\mathcal{D}_{pros1}$                   & $\mathcal{D}_{pros2}$  & $\mathcal{D}_{kidney}$ \\ \hline
\multirow{3}{*}{No Calibrated } & AGNet   & 90.21  & 71.86  & 96.31  & 78.05     & 79.05  & 80.38  & 84.67 & 69.72 & 70.69 \\
 & pOSAL     & 94.52  & 83.81  & 95.85  & 84.07     & 81.56  & 83.78 & 85.46  & 71.81 & 71.60   \\ 
 &BEAL     & 94.84  & 84.92 & 97.08  & 85.97     & 82.79  & 85.70  & 85.96  & 73.51 & 72.41   \\ \hline
 \multirow{3}{*}{Calibrated} & LUW    & 94.72  & 85.12  & 96.75  & 86.63     & 83.67  & 87.44          & 86.31  & 75.78 & 73.25   \\
 & UECNN    & 94.42  & 84.80  & 96.37  & 85.52     & 82.77  & 85.81  & 86.30  & 70.32 & 72.28   \\
 & MRNet    & 95.00  & 86.40  & 97.55  & 87.20     & 84.35  & 88.41 & 87.22   & 75.94 & 74.96   \\ \cline{1-11}
 Self-calibrated & ConP (ours) & \textbf{96.02}  & \textbf{88.79}  & \textbf{97.84}   & \textbf{89.35}     & \textbf{86.33}  & \textbf{89.62}            & \textbf{87.89}  & \textbf{77.45} & \textbf{75.50}  \\ \hline
\bottomrule
\end{tabular}%
}
\label{tab:conp}
\end{table*}

\begin{table*}[h]
\centering
\setlength\arrayrulewidth{0.1pt}
\caption{Quantitative comparison between DivP and SOTA label-fusion methods.}
\centering
\resizebox{\columnwidth}{!}{
\begin{tabular}{c|cc|cc|c|c|c|c|c}
\toprule
      & \multicolumn{2}{c|}{OD/OC  (REFUGE)} & \multicolumn{2}{c|}{OD/OC  (RIGA)} & \multicolumn{1}{c|}{Brain-Growth} & \multicolumn{1}{c|}{Brain-Tumor} & \multicolumn{1}{c|}{Prostate1} &
     \multicolumn{1}{c|}{Prostate2} &
     \multicolumn{1}{c}{Kidney}\\ \hline
          & $\mathcal{D}_{disc}$     & $\mathcal{D}_{cup}$      & $\mathcal{D}_{disc}$       & $\mathcal{D}_{cup}$           & $\mathcal{D}_{brain}$                   & $\mathcal{D}_{tumor}$                   & $\mathcal{D}_{pros1}$                   & $\mathcal{D}_{pros2}$  & $\mathcal{D}_{kidney}$ \\ \hline
 MV   & 94.52  & 80.31  & 95.60  & 82.47     & 83.06  & 79.60  & 89.67 & 86.10 & 84.36 \\
 STAPLE     & 95.74  & 80.91  & 95.71  & 83.75     & 85.41  & 84.35 & 89.38  & 86.80 & 85.53   \\ 
 MaxMig     & 96.43  & 87.49 & 97.80  & 88.45     & 88.78  & 89.70  & 92.85  & 90.21 & 87.62   \\
 Diag    & 96.75  & 91.22  & 97.93  & 90.79     & -  & -         & -  & - & -  \\ \cline{1-10}
 DivP (ours)   & \textbf{97.55}  & \textbf{92.76}  & \textbf{98.70}   & \textbf{91.53}     & \textbf{89.70}  & \textbf{91.01}            & \textbf{94.29}  & \textbf{92.26} & \textbf{91.12}  \\ \hline
\bottomrule
\end{tabular}%
}
\label{tab:divp}
\end{table*}

\begin{table}[h]
\centering
\setlength\arrayrulewidth{0.1pt}
\caption{Ablation study of the integration strategies and the loss functions on the brain tumor segmentation.}
\resizebox{\columnwidth}{!}{
\begin{tabular}{ccc|ccc|c}
\toprule
\multicolumn{3}{c|}{Integration}         & \multicolumn{3}{c|}{Loss} & Brain-Tumor \\ \hline
Concat & ConvLSTM  & Ours      & $\mathcal{L}_{MH}$      & $ \mathcal{L}_{sff}$ & $ \mathcal{L}_{MH} + \mathcal{L}_{sff}$    & $ \mathcal{D}_{tumor} $ \\ \hline
   \checkmark    &          &                    &        &     \checkmark   &        &     86.14      \\
       &      \checkmark    &                    &        &  \checkmark      &        & 86.73   \\
       &          & \checkmark      &        &     \checkmark   &       &      \textbf{89.58}       \\
       &          &            \checkmark        &  \checkmark      &        &        &  86.69      \\
       &          &           \checkmark         &        &        &   \checkmark     &       88.52      \\ \hline
\bottomrule
\end{tabular}}
\vspace{-10pt}
\label{tab:ablation}
\end{table}

\subsubsection{Comparing ConP with SOTA calibrated segmentation methods}
The calibrated segmentation methods are commonly proposed to capture the uncertainty from multiple labels by assuming that there is no priority on any of them. Unlike their strategies, we jointly optimize the calibrated segmentation model (ConP) with the priority evaluation (DivP), and achieve stronger calibration ability on ConP. To verify its calibration ability, we compare it with SOTA calibrated segmentation methods, including LUW (\cite{rupprecht2017learning}, UECNN (\cite{jensen2019improving},and MRNet  (\cite{ji2021learning} and SOTA non-calibrated segmentation methods, including AGNet  (\cite{acnet}, BEAL  (\cite{beal},and pOSAL (\cite{posal} on a wide range of medical segmentation tasks. The calibrated and our methods are all compared under majority vote in the inference following the setting of previous works.
The quantitative results are shown in Table \ref{tab:conp}. As listed in Table \ref{tab:conp}, calibrated methods work better than non-calibrated ones, and the proposed self-calibrated method works better than calibrated ones. Proposed ConP consistently achieves superior performance, indicating its stronger calibration ability. Compared with other calibrated methods, the decomposition process in the recurrence strategy can help ConP better realize each component of the fusion and finally achieve better calibration results.


\subsubsection{Comparing DivP with SOTA label fusion methods}
On the other side, the calibrated segmentation of ConP can also facilitate DivP. We quantitatively compare the self-fusion labels of DivP with SOTA label fusion strategies in Table \ref{tab:divp}. The compared methods include traditional majority vote (MV), STAPLE  (\cite{warfield2004simultaneous}, MaxMig (\cite{cao2019max} and Diag  (\cite{wu2022opinions}. The quantitative results are shown in Table \ref{tab:divp}. We can see that the traditional majority vote shows inferior performance. STAPLE is an advanced fusion method following the majority first strategy, which also does not work well when few raters are dominant in the annotations, like OC-REFUGE. MaxMig fuses the labels based on the raw image structural prior and obtains fair results on most tasks. Diag utilizes the diagnosis labels for the label fusion and works much better. But since it requires the diagnosis labels, the application horizon is limited. The proposed method not only fuses the labels based on the raw image structural prior, but can also dynamically adjust the fusions based on the calibration, finally achieving the best consistency with the ground-truth.

\subsubsection{Ablation Study}
We conduct detailed ablation study on the integration strategies and loss functions. The quantitative results of brain tumor segmentation are shown in Table \ref{tab:ablation}. We compare our attention-convolution hybrid integration strategy with Concat, which is concatenating the segmentation and confidence feature for the integration, and ConvLSTM, which is to use ConvLSTM (\cite{shi2015convolutional} for the integration (\cite{ji2021learning}. 
Thanks to the global and dynamic nature of the scale dot-product attentive mechanism, the proposed integration gains a considerable 2.85\% improvement compared with ConvLSTM. Then we compare the effect of traditional multi-head loss $\mathcal{L}_{MH}$ and the proposed shuffle loss $\mathcal{L}_{sff}$ as an additional constraint w.r.t DivP. We can see that $\mathcal{L}_{sff}$ outperforms $\mathcal{L}_{MH}$ by disentangling the multi-rater confidences and annotations. We also note that applying $\mathcal{L}_{sff}$ individually even works better than applying the combination of $\mathcal{L}_{MH}$ and $\mathcal{L}_{sff}$. A possible explanation is that $\mathcal{L}_{MH}$ constraints each head to estimate specific rater annotations, while different separations would be possible for one segmentation, so the extra constraint of $\mathcal{L}_{MH}$ eventually degrades the performance. The experimental results show the combination of attentive integration and shuffle loss works the best for MrPrism, which is thus adopted in our final implementation.

\section{Conclusion}
Toward learning medical segmentation from multi-rater labels, we propose a self-calibrated segmentation model to calibrate the segmentation and estimate the multi-rater confidences recurrently. In this way, the shortcomings of the two independent tasks are complemented, thus gaining mutual improvement. Extensive empirical experiments demonstrated that self-calibrated segmentation outperforms the alternative multi-rater learning methods. 

\clearpage

\bibliographystyle{model2-names.bst}\biboptions{authoryear}
\bibliography{refs}

\begin{thebibliography}{45}
\expandafter\ifx\csname natexlab\endcsname\relax\def\natexlab#1{#1}\fi
\providecommand{\url}[1]{\texttt{#1}}
\providecommand{\href}[2]{#2}
\providecommand{\path}[1]{#1}
\providecommand{\DOIprefix}{doi:}
\providecommand{\ArXivprefix}{arXiv:}
\providecommand{\URLprefix}{URL: }
\providecommand{\Pubmedprefix}{pmid:}
\providecommand{\doi}[1]{\href{http://dx.doi.org/#1}{\path{#1}}}
\providecommand{\Pubmed}[1]{\href{pmid:#1}{\path{#1}}}
\providecommand{\bibinfo}[2]{#2}
\ifx\xfnm\relax \def\xfnm[#1]{\unskip,\space#1}\fi
\bibitem[{qub()}]{qubiq}
, .
\newblock \bibinfo{title}{{MICCAI QUBIQ} challenge}.
\newblock \bibinfo{howpublished}{\url{https://qubiq21.grand-challenge.org/}}.
\newblock \bibinfo{note}{Accessed: 2022-08-30}.
\bibitem[{Albarqouni et~al.(2016)Albarqouni, Baur, Achilles, Belagiannis,
  Demirci and Navab}]{albarqouni2016aggnet}
\bibinfo{author}{Albarqouni, S.}, \bibinfo{author}{Baur, C.},
  \bibinfo{author}{Achilles, F.}, \bibinfo{author}{Belagiannis, V.},
  \bibinfo{author}{Demirci, S.}, \bibinfo{author}{Navab, N.},
  \bibinfo{year}{2016}.
\newblock \bibinfo{title}{Aggnet: deep learning from crowds for mitosis
  detection in breast cancer histology images}.
\newblock \bibinfo{journal}{IEEE transactions on medical imaging}
  \bibinfo{volume}{35}, \bibinfo{pages}{1313--1321}.
\bibitem[{Almazroa et~al.(2017)Almazroa, Alodhayb, Osman, Ramadan, Hummadi,
  Dlaim, Alkatee, Raahemifar and Lakshminarayanan}]{almazroa2017agreement}
\bibinfo{author}{Almazroa, A.}, \bibinfo{author}{Alodhayb, S.},
  \bibinfo{author}{Osman, E.}, \bibinfo{author}{Ramadan, E.},
  \bibinfo{author}{Hummadi, M.}, \bibinfo{author}{Dlaim, M.},
  \bibinfo{author}{Alkatee, M.}, \bibinfo{author}{Raahemifar, K.},
  \bibinfo{author}{Lakshminarayanan, V.}, \bibinfo{year}{2017}.
\newblock \bibinfo{title}{Agreement among ophthalmologists in marking the optic
  disc and optic cup in fundus images}.
\newblock \bibinfo{journal}{International ophthalmology} \bibinfo{volume}{37},
  \bibinfo{pages}{701--717}.
\bibitem[{Ba et~al.(2016)Ba, Kiros and Hinton}]{ba2016layer}
\bibinfo{author}{Ba, J.L.}, \bibinfo{author}{Kiros, J.R.},
  \bibinfo{author}{Hinton, G.E.}, \bibinfo{year}{2016}.
\newblock \bibinfo{title}{Layer normalization}.
\newblock \bibinfo{journal}{arXiv preprint arXiv:1607.06450} .
\bibitem[{Baumgartner et~al.(2019)Baumgartner, Tezcan, Chaitanya, H{\"o}tker,
  Muehlematter, Schawkat, Becker, Donati and Konukoglu}]{baumgartner2019phiseg}
\bibinfo{author}{Baumgartner, C.F.}, \bibinfo{author}{Tezcan, K.C.},
  \bibinfo{author}{Chaitanya, K.}, \bibinfo{author}{H{\"o}tker, A.M.},
  \bibinfo{author}{Muehlematter, U.J.}, \bibinfo{author}{Schawkat, K.},
  \bibinfo{author}{Becker, A.S.}, \bibinfo{author}{Donati, O.},
  \bibinfo{author}{Konukoglu, E.}, \bibinfo{year}{2019}.
\newblock \bibinfo{title}{Phiseg: Capturing uncertainty in medical image
  segmentation}, in: \bibinfo{booktitle}{International Conference on Medical
  Image Computing and Computer-Assisted Intervention},
  \bibinfo{organization}{Springer}. pp. \bibinfo{pages}{119--127}.
\bibitem[{Cao et~al.(2019)Cao, Xu, Kong and Wang}]{cao2019max}
\bibinfo{author}{Cao, P.}, \bibinfo{author}{Xu, Y.}, \bibinfo{author}{Kong,
  Y.}, \bibinfo{author}{Wang, Y.}, \bibinfo{year}{2019}.
\newblock \bibinfo{title}{Max-mig: an information theoretic approach for joint
  learning from crowds}.
\newblock \bibinfo{journal}{arXiv preprint arXiv:1905.13436} .
\bibitem[{Carion et~al.(2020)Carion, Massa, Synnaeve, Usunier, Kirillov and
  Zagoruyko}]{carion2020end}
\bibinfo{author}{Carion, N.}, \bibinfo{author}{Massa, F.},
  \bibinfo{author}{Synnaeve, G.}, \bibinfo{author}{Usunier, N.},
  \bibinfo{author}{Kirillov, A.}, \bibinfo{author}{Zagoruyko, S.},
  \bibinfo{year}{2020}.
\newblock \bibinfo{title}{End-to-end object detection with transformers}, in:
  \bibinfo{booktitle}{European Conference on Computer Vision},
  \bibinfo{organization}{Springer}. pp. \bibinfo{pages}{213--229}.
\bibitem[{Chollet(2017)}]{chollet2017xception}
\bibinfo{author}{Chollet, F.}, \bibinfo{year}{2017}.
\newblock \bibinfo{title}{Xception: Deep learning with depthwise separable
  convolutions}, in: \bibinfo{booktitle}{Proceedings of the IEEE conference on
  computer vision and pattern recognition}, pp. \bibinfo{pages}{1251--1258}.
\bibitem[{Chou and Lee(2019)}]{chou2019every}
\bibinfo{author}{Chou, H.C.}, \bibinfo{author}{Lee, C.C.},
  \bibinfo{year}{2019}.
\newblock \bibinfo{title}{Every rating matters: Joint learning of subjective
  labels and individual annotators for speech emotion classification}, in:
  \bibinfo{booktitle}{ICASSP 2019-2019 IEEE International Conference on
  Acoustics, Speech and Signal Processing (ICASSP)},
  \bibinfo{organization}{IEEE}. pp. \bibinfo{pages}{5886--5890}.
\bibitem[{Dalvi et~al.(2013)Dalvi, Dasgupta, Kumar and
  Rastogi}]{dalvi2013aggregating}
\bibinfo{author}{Dalvi, N.}, \bibinfo{author}{Dasgupta, A.},
  \bibinfo{author}{Kumar, R.}, \bibinfo{author}{Rastogi, V.},
  \bibinfo{year}{2013}.
\newblock \bibinfo{title}{Aggregating crowdsourced binary ratings}, in:
  \bibinfo{booktitle}{Proceedings of the 22nd international conference on World
  Wide Web}, pp. \bibinfo{pages}{285--294}.
\bibitem[{Dawid and Skene(1979)}]{dawid1979maximum}
\bibinfo{author}{Dawid, A.P.}, \bibinfo{author}{Skene, A.M.},
  \bibinfo{year}{1979}.
\newblock \bibinfo{title}{Maximum likelihood estimation of observer error-rates
  using the em algorithm}.
\newblock \bibinfo{journal}{Journal of the Royal Statistical Society: Series C
  (Applied Statistics)} \bibinfo{volume}{28}, \bibinfo{pages}{20--28}.
\bibitem[{Dong et~al.(2018)Dong, Wang, Yin and Shi}]{dong2018denoising}
\bibinfo{author}{Dong, W.}, \bibinfo{author}{Wang, P.}, \bibinfo{author}{Yin,
  W.}, \bibinfo{author}{Shi, G.}, \bibinfo{year}{2018}.
\newblock \bibinfo{title}{Denoising prior driven deep neural network for image
  restoration}.
\newblock \bibinfo{journal}{IEEE transactions on pattern analysis and machine
  intelligence} .
\bibitem[{Dosovitskiy et~al.(2020)Dosovitskiy, Beyer, Kolesnikov, Weissenborn,
  Zhai, Unterthiner, Dehghani, Minderer, Heigold, Gelly
  et~al.}]{dosovitskiy2020image}
\bibinfo{author}{Dosovitskiy, A.}, \bibinfo{author}{Beyer, L.},
  \bibinfo{author}{Kolesnikov, A.}, \bibinfo{author}{Weissenborn, D.},
  \bibinfo{author}{Zhai, X.}, \bibinfo{author}{Unterthiner, T.},
  \bibinfo{author}{Dehghani, M.}, \bibinfo{author}{Minderer, M.},
  \bibinfo{author}{Heigold, G.}, \bibinfo{author}{Gelly, S.}, et~al.,
  \bibinfo{year}{2020}.
\newblock \bibinfo{title}{An image is worth 16x16 words: Transformers for image
  recognition at scale}.
\newblock \bibinfo{journal}{arXiv preprint arXiv:2010.11929} .
\bibitem[{Geman and Reynolds(1992)}]{geman1992constrained}
\bibinfo{author}{Geman, D.}, \bibinfo{author}{Reynolds, G.},
  \bibinfo{year}{1992}.
\newblock \bibinfo{title}{Constrained restoration and the recovery of
  discontinuities}.
\newblock \bibinfo{journal}{IEEE Transactions on pattern analysis and machine
  intelligence} \bibinfo{volume}{14}, \bibinfo{pages}{367--383}.
\bibitem[{Ghosh et~al.(2011)Ghosh, Kale and McAfee}]{ghosh2011moderates}
\bibinfo{author}{Ghosh, A.}, \bibinfo{author}{Kale, S.},
  \bibinfo{author}{McAfee, P.}, \bibinfo{year}{2011}.
\newblock \bibinfo{title}{Who moderates the moderators? crowdsourcing abuse
  detection in user-generated content}, in: \bibinfo{booktitle}{Proceedings of
  the 12th ACM conference on Electronic commerce}, pp.
  \bibinfo{pages}{167--176}.
\bibitem[{Guan et~al.(2018)Guan, Gulshan, Dai and Hinton}]{guan2018said}
\bibinfo{author}{Guan, M.Y.}, \bibinfo{author}{Gulshan, V.},
  \bibinfo{author}{Dai, A.M.}, \bibinfo{author}{Hinton, G.E.},
  \bibinfo{year}{2018}.
\newblock \bibinfo{title}{Who said what: Modeling individual labelers improves
  classification}, in: \bibinfo{booktitle}{Thirty-Second AAAI Conference on
  Artificial Intelligence}.
\bibitem[{He et~al.(2016)He, Zhang, Ren and Sun}]{resnet}
\bibinfo{author}{He, K.}, \bibinfo{author}{Zhang, X.}, \bibinfo{author}{Ren,
  S.}, \bibinfo{author}{Sun, J.}, \bibinfo{year}{2016}.
\newblock \bibinfo{title}{Deep residual learning for image recognition}, in:
  \bibinfo{booktitle}{Proceedings of the IEEE conference on computer vision and
  pattern recognition}, pp. \bibinfo{pages}{770--778}.
\bibitem[{Jensen et~al.(2019)Jensen, J{\o}rgensen, Jalaboi, Hansen and
  Olsen}]{jensen2019improving}
\bibinfo{author}{Jensen, M.H.}, \bibinfo{author}{J{\o}rgensen, D.R.},
  \bibinfo{author}{Jalaboi, R.}, \bibinfo{author}{Hansen, M.E.},
  \bibinfo{author}{Olsen, M.A.}, \bibinfo{year}{2019}.
\newblock \bibinfo{title}{Improving uncertainty estimation in convolutional
  neural networks using inter-rater agreement}, in:
  \bibinfo{booktitle}{International Conference on Medical Image Computing and
  Computer-Assisted Intervention}, \bibinfo{organization}{Springer}. pp.
  \bibinfo{pages}{540--548}.
\bibitem[{Ji et~al.(2021)Ji, Yu, Wu, Ma, Bian, Bi, Li, Liu, Cheng and
  Zheng}]{ji2021learning}
\bibinfo{author}{Ji, W.}, \bibinfo{author}{Yu, S.}, \bibinfo{author}{Wu, J.},
  \bibinfo{author}{Ma, K.}, \bibinfo{author}{Bian, C.}, \bibinfo{author}{Bi,
  Q.}, \bibinfo{author}{Li, J.}, \bibinfo{author}{Liu, H.},
  \bibinfo{author}{Cheng, L.}, \bibinfo{author}{Zheng, Y.},
  \bibinfo{year}{2021}.
\newblock \bibinfo{title}{Learning calibrated medical image segmentation via
  multi-rater agreement modeling}, in: \bibinfo{booktitle}{Proceedings of the
  IEEE/CVF Conference on Computer Vision and Pattern Recognition}, pp.
  \bibinfo{pages}{12341--12351}.
\bibitem[{Jungo et~al.(2018)Jungo, Meier, Ermis, Blatti-Moreno, Herrmann, Wiest
  and Reyes}]{jungo2018effect}
\bibinfo{author}{Jungo, A.}, \bibinfo{author}{Meier, R.},
  \bibinfo{author}{Ermis, E.}, \bibinfo{author}{Blatti-Moreno, M.},
  \bibinfo{author}{Herrmann, E.}, \bibinfo{author}{Wiest, R.},
  \bibinfo{author}{Reyes, M.}, \bibinfo{year}{2018}.
\newblock \bibinfo{title}{On the effect of inter-observer variability for a
  reliable estimation of uncertainty of medical image segmentation}, in:
  \bibinfo{booktitle}{International Conference on Medical Image Computing and
  Computer-Assisted Intervention}, \bibinfo{organization}{Springer}. pp.
  \bibinfo{pages}{682--690}.
\bibitem[{Karger et~al.(2014)Karger, Oh and Shah}]{karger2014budget}
\bibinfo{author}{Karger, D.R.}, \bibinfo{author}{Oh, S.},
  \bibinfo{author}{Shah, D.}, \bibinfo{year}{2014}.
\newblock \bibinfo{title}{Budget-optimal task allocation for reliable
  crowdsourcing systems}.
\newblock \bibinfo{journal}{Operations Research} \bibinfo{volume}{62},
  \bibinfo{pages}{1--24}.
\bibitem[{Kendall et~al.(2015)Kendall, Badrinarayanan and
  Cipolla}]{kendall2015bayesian}
\bibinfo{author}{Kendall, A.}, \bibinfo{author}{Badrinarayanan, V.},
  \bibinfo{author}{Cipolla, R.}, \bibinfo{year}{2015}.
\newblock \bibinfo{title}{Bayesian segnet: Model uncertainty in deep
  convolutional encoder-decoder architectures for scene understanding}.
\newblock \bibinfo{journal}{arXiv preprint arXiv:1511.02680} .
\bibitem[{Kendall and Gal(2017)}]{kendall2017uncertainties}
\bibinfo{author}{Kendall, A.}, \bibinfo{author}{Gal, Y.}, \bibinfo{year}{2017}.
\newblock \bibinfo{title}{What uncertainties do we need in bayesian deep
  learning for computer vision?}
\newblock \bibinfo{journal}{Advances in neural information processing systems}
  \bibinfo{volume}{30}.
\bibitem[{Kingma and Ba(2014)}]{kingma2014adam}
\bibinfo{author}{Kingma, D.P.}, \bibinfo{author}{Ba, J.}, \bibinfo{year}{2014}.
\newblock \bibinfo{title}{Adam: A method for stochastic optimization}.
\newblock \bibinfo{journal}{arXiv preprint arXiv:1412.6980} .
\bibitem[{Kohl et~al.(2018)Kohl, Romera-Paredes, Meyer, De~Fauw, Ledsam,
  Maier-Hein, Eslami, Jimenez~Rezende and Ronneberger}]{kohl2018probabilistic}
\bibinfo{author}{Kohl, S.}, \bibinfo{author}{Romera-Paredes, B.},
  \bibinfo{author}{Meyer, C.}, \bibinfo{author}{De~Fauw, J.},
  \bibinfo{author}{Ledsam, J.R.}, \bibinfo{author}{Maier-Hein, K.},
  \bibinfo{author}{Eslami, S.}, \bibinfo{author}{Jimenez~Rezende, D.},
  \bibinfo{author}{Ronneberger, O.}, \bibinfo{year}{2018}.
\newblock \bibinfo{title}{A probabilistic u-net for segmentation of ambiguous
  images}.
\newblock \bibinfo{journal}{Advances in neural information processing systems}
  \bibinfo{volume}{31}.
\bibitem[{Kohl et~al.(2019)Kohl, Romera-Paredes, Maier-Hein, Rezende, Eslami,
  Kohli, Zisserman and Ronneberger}]{kohl2019hierarchical}
\bibinfo{author}{Kohl, S.A.}, \bibinfo{author}{Romera-Paredes, B.},
  \bibinfo{author}{Maier-Hein, K.H.}, \bibinfo{author}{Rezende, D.J.},
  \bibinfo{author}{Eslami, S.}, \bibinfo{author}{Kohli, P.},
  \bibinfo{author}{Zisserman, A.}, \bibinfo{author}{Ronneberger, O.},
  \bibinfo{year}{2019}.
\newblock \bibinfo{title}{A hierarchical probabilistic u-net for modeling
  multi-scale ambiguities}.
\newblock \bibinfo{journal}{arXiv preprint arXiv:1905.13077} .
\bibitem[{Lakshminarayanan et~al.(2017)Lakshminarayanan, Pritzel and
  Blundell}]{lakshminarayanan2017simple}
\bibinfo{author}{Lakshminarayanan, B.}, \bibinfo{author}{Pritzel, A.},
  \bibinfo{author}{Blundell, C.}, \bibinfo{year}{2017}.
\newblock \bibinfo{title}{Simple and scalable predictive uncertainty estimation
  using deep ensembles}.
\newblock \bibinfo{journal}{Advances in neural information processing systems}
  \bibinfo{volume}{30}.
\bibitem[{Lee et~al.(2016)Lee, Purushwalkam Shiva~Prakash, Cogswell, Ranjan,
  Crandall and Batra}]{lee2016stochastic}
\bibinfo{author}{Lee, S.}, \bibinfo{author}{Purushwalkam Shiva~Prakash, S.},
  \bibinfo{author}{Cogswell, M.}, \bibinfo{author}{Ranjan, V.},
  \bibinfo{author}{Crandall, D.}, \bibinfo{author}{Batra, D.},
  \bibinfo{year}{2016}.
\newblock \bibinfo{title}{Stochastic multiple choice learning for training
  diverse deep ensembles}.
\newblock \bibinfo{journal}{Advances in Neural Information Processing Systems}
  \bibinfo{volume}{29}.
\bibitem[{Mehta et~al.(2021)Mehta, Filos, Baid, Sako, McKinley, Rebsamen,
  D{\"a}twyler, Meier, Radojewski, Murugesan et~al.}]{mehta2021qu}
\bibinfo{author}{Mehta, R.}, \bibinfo{author}{Filos, A.},
  \bibinfo{author}{Baid, U.}, \bibinfo{author}{Sako, C.},
  \bibinfo{author}{McKinley, R.}, \bibinfo{author}{Rebsamen, M.},
  \bibinfo{author}{D{\"a}twyler, K.}, \bibinfo{author}{Meier, R.},
  \bibinfo{author}{Radojewski, P.}, \bibinfo{author}{Murugesan, G.K.}, et~al.,
  \bibinfo{year}{2021}.
\newblock \bibinfo{title}{Qu-brats: Miccai brats 2020 challenge on quantifying
  uncertainty in brain tumor segmentation--analysis of ranking metrics and
  benchmarking results}.
\newblock \bibinfo{journal}{arXiv preprint arXiv:2112.10074} .
\bibitem[{Milletari et~al.(2016)Milletari, Navab and Ahmadi}]{milletari2016v}
\bibinfo{author}{Milletari, F.}, \bibinfo{author}{Navab, N.},
  \bibinfo{author}{Ahmadi, S.A.}, \bibinfo{year}{2016}.
\newblock \bibinfo{title}{V-net: Fully convolutional neural networks for
  volumetric medical image segmentation}, in: \bibinfo{booktitle}{2016 fourth
  international conference on 3D vision (3DV)}, \bibinfo{organization}{IEEE}.
  pp. \bibinfo{pages}{565--571}.
\bibitem[{Orlando et~al.(2020)Orlando, Fu, Breda, van Keer, Bathula,
  Diaz-Pinto, Fang, Heng, Kim, Lee et~al.}]{orlando2020refuge}
\bibinfo{author}{Orlando, J.I.}, \bibinfo{author}{Fu, H.},
  \bibinfo{author}{Breda, J.B.}, \bibinfo{author}{van Keer, K.},
  \bibinfo{author}{Bathula, D.R.}, \bibinfo{author}{Diaz-Pinto, A.},
  \bibinfo{author}{Fang, R.}, \bibinfo{author}{Heng, P.A.},
  \bibinfo{author}{Kim, J.}, \bibinfo{author}{Lee, J.}, et~al.,
  \bibinfo{year}{2020}.
\newblock \bibinfo{title}{Refuge challenge: A unified framework for evaluating
  automated methods for glaucoma assessment from fundus photographs}.
\newblock \bibinfo{journal}{Medical image analysis} \bibinfo{volume}{59},
  \bibinfo{pages}{101570}.
\bibitem[{Raykar et~al.(2010)Raykar, Yu, Zhao, Valadez, Florin, Bogoni and
  Moy}]{raykar2010learning}
\bibinfo{author}{Raykar, V.C.}, \bibinfo{author}{Yu, S.},
  \bibinfo{author}{Zhao, L.H.}, \bibinfo{author}{Valadez, G.H.},
  \bibinfo{author}{Florin, C.}, \bibinfo{author}{Bogoni, L.},
  \bibinfo{author}{Moy, L.}, \bibinfo{year}{2010}.
\newblock \bibinfo{title}{Learning from crowds}.
\newblock \bibinfo{journal}{Journal of Machine Learning Research}
  \bibinfo{volume}{11}, \bibinfo{pages}{1297--1322}.
\bibitem[{Rodrigues and Pereira(2018)}]{rodrigues2018deep}
\bibinfo{author}{Rodrigues, F.}, \bibinfo{author}{Pereira, F.C.},
  \bibinfo{year}{2018}.
\newblock \bibinfo{title}{Deep learning from crowds}, in:
  \bibinfo{booktitle}{Thirty-Second AAAI Conference on Artificial
  Intelligence}.
\bibitem[{Ronneberger et~al.(2015)Ronneberger, Fischer and
  Brox}]{ronneberger2015u}
\bibinfo{author}{Ronneberger, O.}, \bibinfo{author}{Fischer, P.},
  \bibinfo{author}{Brox, T.}, \bibinfo{year}{2015}.
\newblock \bibinfo{title}{U-net: Convolutional networks for biomedical image
  segmentation}, in: \bibinfo{booktitle}{International Conference on Medical
  image computing and computer-assisted intervention},
  \bibinfo{organization}{Springer}. pp. \bibinfo{pages}{234--241}.
\bibitem[{Rupprecht et~al.(2017)Rupprecht, Laina, DiPietro, Baust, Tombari,
  Navab and Hager}]{rupprecht2017learning}
\bibinfo{author}{Rupprecht, C.}, \bibinfo{author}{Laina, I.},
  \bibinfo{author}{DiPietro, R.}, \bibinfo{author}{Baust, M.},
  \bibinfo{author}{Tombari, F.}, \bibinfo{author}{Navab, N.},
  \bibinfo{author}{Hager, G.D.}, \bibinfo{year}{2017}.
\newblock \bibinfo{title}{Learning in an uncertain world: Representing
  ambiguity through multiple hypotheses}, in: \bibinfo{booktitle}{Proceedings
  of the IEEE International Conference on Computer Vision}, pp.
  \bibinfo{pages}{3591--3600}.
\bibitem[{Shi et~al.(2015)Shi, Chen, Wang, Yeung, Wong and
  Woo}]{shi2015convolutional}
\bibinfo{author}{Shi, X.}, \bibinfo{author}{Chen, Z.}, \bibinfo{author}{Wang,
  H.}, \bibinfo{author}{Yeung, D.Y.}, \bibinfo{author}{Wong, W.K.},
  \bibinfo{author}{Woo, W.c.}, \bibinfo{year}{2015}.
\newblock \bibinfo{title}{Convolutional lstm network: A machine learning
  approach for precipitation nowcasting}.
\newblock \bibinfo{journal}{Advances in neural information processing systems}
  \bibinfo{volume}{28}.
\bibitem[{Tanno et~al.(2019)Tanno, Saeedi, Sankaranarayanan, Alexander and
  Silberman}]{tanno2019learning}
\bibinfo{author}{Tanno, R.}, \bibinfo{author}{Saeedi, A.},
  \bibinfo{author}{Sankaranarayanan, S.}, \bibinfo{author}{Alexander, D.C.},
  \bibinfo{author}{Silberman, N.}, \bibinfo{year}{2019}.
\newblock \bibinfo{title}{Learning from noisy labels by regularized estimation
  of annotator confusion}, in: \bibinfo{booktitle}{Proceedings of the IEEE/CVF
  Conference on Computer Vision and Pattern Recognition}, pp.
  \bibinfo{pages}{11244--11253}.
\bibitem[{Wang et~al.(2019a)Wang, Yu, Li, Yang, Fu and Heng}]{beal}
\bibinfo{author}{Wang, S.}, \bibinfo{author}{Yu, L.}, \bibinfo{author}{Li, K.},
  \bibinfo{author}{Yang, X.}, \bibinfo{author}{Fu, C.W.},
  \bibinfo{author}{Heng, P.A.}, \bibinfo{year}{2019}a.
\newblock \bibinfo{title}{Boundary and entropy-driven adversarial learning for
  fundus image segmentation}, in: \bibinfo{booktitle}{International Conference
  on Medical Image Computing and Computer-Assisted Intervention},
  \bibinfo{organization}{Springer}. pp. \bibinfo{pages}{102--110}.
\bibitem[{Wang et~al.(2019b)Wang, Yu, Yang, Fu and Heng}]{posal}
\bibinfo{author}{Wang, S.}, \bibinfo{author}{Yu, L.}, \bibinfo{author}{Yang,
  X.}, \bibinfo{author}{Fu, C.W.}, \bibinfo{author}{Heng, P.A.},
  \bibinfo{year}{2019}b.
\newblock \bibinfo{title}{Patch-based output space adversarial learning for
  joint optic disc and cup segmentation}.
\newblock \bibinfo{journal}{IEEE transactions on medical imaging}
  \bibinfo{volume}{38}, \bibinfo{pages}{2485--2495}.
\bibitem[{Warfield et~al.(2004)Warfield, Zou and
  Wells}]{warfield2004simultaneous}
\bibinfo{author}{Warfield, S.K.}, \bibinfo{author}{Zou, K.H.},
  \bibinfo{author}{Wells, W.M.}, \bibinfo{year}{2004}.
\newblock \bibinfo{title}{Simultaneous truth and performance level estimation
  (staple): an algorithm for the validation of image segmentation}.
\newblock \bibinfo{journal}{IEEE transactions on medical imaging}
  \bibinfo{volume}{23}, \bibinfo{pages}{903--921}.
\bibitem[{Warrens(2010)}]{warrens2010inequalities}
\bibinfo{author}{Warrens, M.J.}, \bibinfo{year}{2010}.
\newblock \bibinfo{title}{Inequalities between multi-rater kappas}.
\newblock \bibinfo{journal}{Advances in data analysis and classification}
  \bibinfo{volume}{4}, \bibinfo{pages}{271--286}.
\bibitem[{Wu and Di(2020)}]{wu2020integrating}
\bibinfo{author}{Wu, J.}, \bibinfo{author}{Di, X.}, \bibinfo{year}{2020}.
\newblock \bibinfo{title}{Integrating neural networks into the blind deblurring
  framework to compete with the end-to-end learning-based methods}.
\newblock \bibinfo{journal}{IEEE Transactions on Image Processing}
  \bibinfo{volume}{29}, \bibinfo{pages}{6841--6851}.
\bibitem[{Wu et~al.(2022)Wu, Fang, Wu, Yang, Yang and Xu}]{wu2022opinions}
\bibinfo{author}{Wu, J.}, \bibinfo{author}{Fang, H.}, \bibinfo{author}{Wu, B.},
  \bibinfo{author}{Yang, D.}, \bibinfo{author}{Yang, Y.}, \bibinfo{author}{Xu,
  Y.}, \bibinfo{year}{2022}.
\newblock \bibinfo{title}{Opinions vary? diagnosis first!}
\newblock \bibinfo{journal}{arXiv preprint arXiv:2202.06505} .
\bibitem[{Zhang et~al.(2017)Zhang, Zuo, Gu and Zhang}]{Zhang2017Learning}
\bibinfo{author}{Zhang, K.}, \bibinfo{author}{Zuo, W.}, \bibinfo{author}{Gu,
  S.}, \bibinfo{author}{Zhang, L.}, \bibinfo{year}{2017}.
\newblock \bibinfo{title}{Learning deep cnn denoiser prior for image
  restoration} , \bibinfo{pages}{2808--2817}.
\bibitem[{Zhang et~al.(2019)Zhang, Fu, Yan, Zhang, Wu, Yang, Tan and
  Xu}]{acnet}
\bibinfo{author}{Zhang, S.}, \bibinfo{author}{Fu, H.}, \bibinfo{author}{Yan,
  Y.}, \bibinfo{author}{Zhang, Y.}, \bibinfo{author}{Wu, Q.},
  \bibinfo{author}{Yang, M.}, \bibinfo{author}{Tan, M.}, \bibinfo{author}{Xu,
  Y.}, \bibinfo{year}{2019}.
\newblock \bibinfo{title}{Attention guided network for retinal image
  segmentation}, in: \bibinfo{booktitle}{International Conference on Medical
  Image Computing and Computer-Assisted Intervention},
  \bibinfo{organization}{Springer}. pp. \bibinfo{pages}{797--805}.

\end{thebibliography}
\clearpage

\section{Supplement Material}
\subsection{NETWORK ARCHITECTURE}

\begin{table}[t!]
\centering
\vspace{-100pt}
\caption{The network architecture of MrPrism. ConP is represented as Deconv [kernel size \texttimes \ kernel size, output channels], Att-Block [patch size \texttimes \ patch size, channels], Concat \& CBR [output channels]. DivP is represented as, Conv [kernel size \texttimes \ kernel size, output channels, stride], maxpool[kernel size \texttimes \ kernel size, stride ], MHA-Block[patch size \texttimes \ patch size, channels, heads]. Separable Convolutions are represented as DW-Conv[kernel size \texttimes \ kernel size, stride], PW-Conv[output channels].}
\small
\setlength{\tabcolsep}{0pt}
\begin{tabular*}{0.9\textwidth}{@{\extracolsep{\fill}}ccc@{}}
\toprule
\multicolumn{3}{c}{\textbf{Notations}} \\
\midrule
Deconv & deconvolution\\
ATT-Block & attention $\to$ add + norm $\to$ MLP $\to$ add + norm\\ 
CBR & Conv $\to$ batchnorm $\to$ residule add\\ 
MHA-Block & multi-head attention $\to$ norm $\to$ MLP $\to$ add + norm\\
DW-Conv & depthwise convolution\\
PW-Conv & pointwise convolution\\
\midrule
\multicolumn{3}{c}{\textbf{ConP}} \\
\midrule
Stage & Stem &Output\\
\midrule
Stage1 & \splitcell{Deconv[2\texttimes2, 512]\\ Att-Block[8\texttimes8, 512]\\ Concat \& CBR[512]} 
 &
  8\texttimes8\texttimes512 \\
\midrule
Stage2 & \splitcell{Deconv[2\texttimes2, 256]\\ Att-Block[7\texttimes7, 256]\\ Concat \& CBR[256]} 
 &
  16\texttimes16\texttimes256 \\
\midrule
Stage3 & \splitcell{Deconv[2\texttimes2, 128]\\ Att-Block[8\texttimes8, 128]\\ Concat \& CBR[128]} 
 &
  32\texttimes32\texttimes128 \\
\midrule
Stage4 & \splitcell{Deconv[2\texttimes2, 64]\\ Att-Block[8\texttimes8, 64]\\ Concat \& CBR[64]} 
 & 64\texttimes64\texttimes64 \\
\midrule
Stage5 & Deconv[2\texttimes2,K]  & 128\texttimes128\texttimes K \\
\midrule
\multicolumn{3}{c}{\textbf{DivP}} \\
\midrule
Stage & Stem & Output \\
\midrule
&\splitcell{Conv[8\texttimes 8, 64, 2]\\ MaxPool[3\texttimes3,2]} & 32\texttimes32\texttimes 32\\
\midrule
Stage1 &\splitcell{MHA-Block[8\texttimes 8, 32, 8]} & 32\texttimes32\texttimes 32\\
\midrule
Stage2 &\splitcell{MHA-Block[8\texttimes 8, 64, 8]} & 32\texttimes32\texttimes 64\\
\midrule
Stage3 &\splitcell{MHA-Block[8\texttimes 8, 64, 8]} & 32\texttimes32\texttimes 64\\
\midrule
Stage4 &\splitcell{MHA-Block[8\texttimes 8, 32, 8]} & 32\texttimes32\texttimes 32\\
\midrule
Stage5 &\splitcell{MHA-Block[8\texttimes 8, K \texttimes R, R]\\  Softmax} & 32\texttimes32\texttimes K\\
\midrule
\multicolumn{3}{c}{\textbf{Separable Convs}} \\
\midrule
Stage & Stem & Output \\
\midrule
Stage1 &\splitcell{DW-Conv[5\texttimes 5, 2]\\
PW-Conv[64]\\
PW-Conv[128]\\
DW-Conv[3\texttimes 3, 2]\\
PW-Conv[256]\\
PW-Conv[512]
} & \splitcell{16\texttimes16\texttimes K\\
16\texttimes16\texttimes 64\\
16\texttimes16\texttimes 128\\
8\texttimes8\texttimes 128\\
8\texttimes8\texttimes 256\\
8\texttimes8\texttimes 512}
\\
\midrule
Stage2 &\splitcell{PW-Conv[64]\\
DW-Conv[3\texttimes 3, 2]\\
PW-Conv[128]\\
PW-Conv[256]
} & \splitcell{32\texttimes32\texttimes 64\\
16\texttimes16\texttimes 64\\
16\texttimes16\texttimes 128\\
16\texttimes16\texttimes 256}
\\
\midrule
Stage3 &\splitcell{PW-Conv[64]\\
PW-Conv[128]
} & \splitcell{32\texttimes32\texttimes 64\\
32\texttimes32\texttimes 128}
\\
\midrule
\bottomrule
\end{tabular*}\label{tab:arc}
\end{table}

\begin{figure*}[h]
\centering
\includegraphics[width=0.8\textwidth]{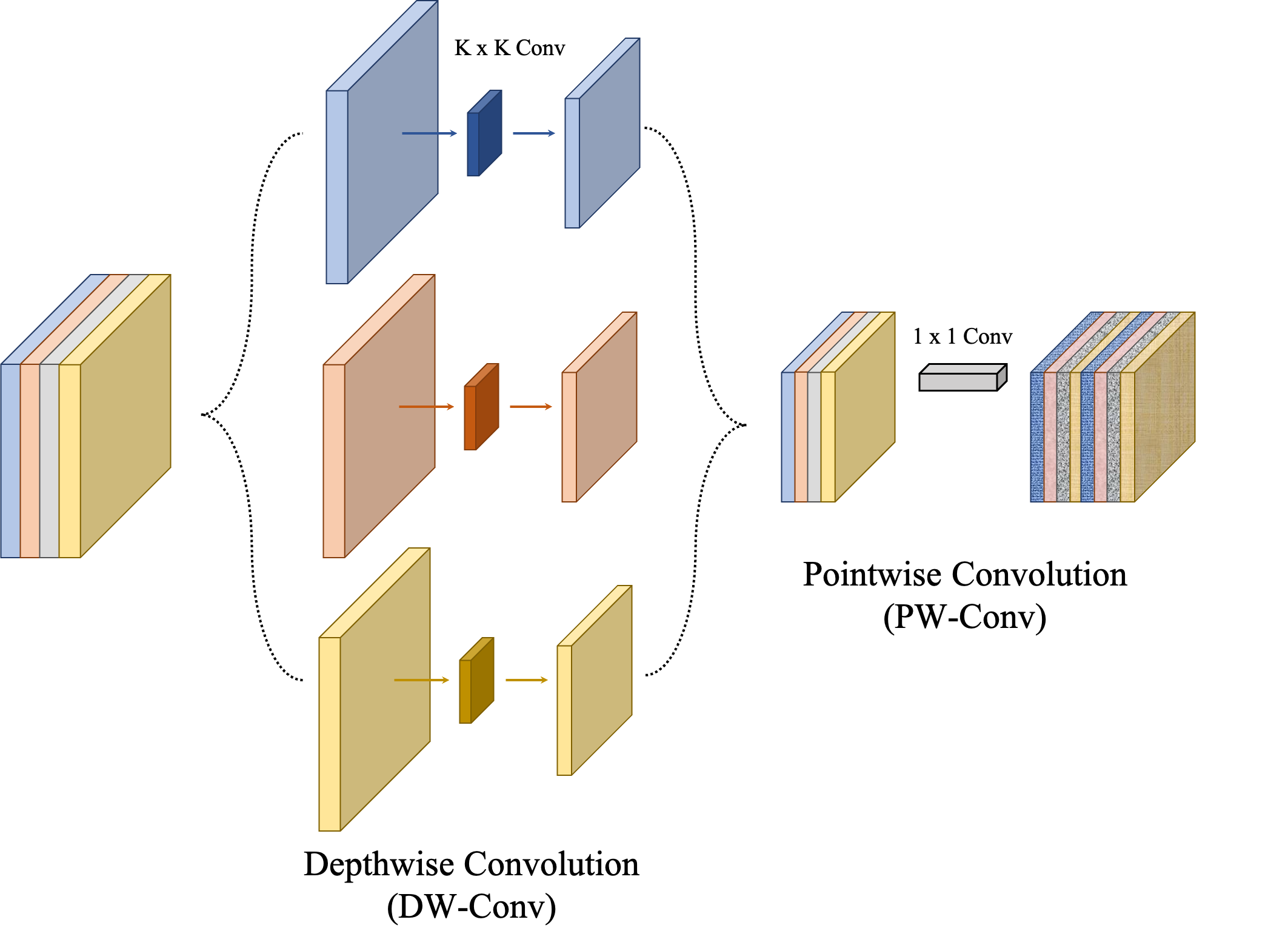}
\caption{An illustration of the separable convolution used for embeding the multi-rater confidences. DW-Conv alters the resolution. PW-Conv alters the channels.}
\label{fig:expvis}
\end{figure*}

\subsection{PROOF OF THEOREMS}
Fortunately, we find that the multi-rater confidences $w^{[M]}$ can be directly represented by $P (z^{[M]} | y^{*})$, i.e., the estimated probability map of multi-rater labels given the optimal calibrated segmentation map. This enable us to directly supervise DivP by the multi-rater labels $z^{[M]}$.

\textit{To keep clarity, follow discussions focus on an arbitrary pixel  (i,j) of the segmentation/confidence maps. Denote by a $A_{i,j} \in \mathbb{R}^{1 \times K}$ the vector at $ (i,j)^{th}$ entry of the matrix $A$. Denote by a $\upsilon^{k}\in \mathbb{R}^{1 \times K}$ the one-hot vector activated at entry $k$. The segmentation task can be considered as the classification tasks for each pixel. Consider $y$ and $y^{*}$ are drawn from random variables $Y$ and $Y^{*}$. 
}

\begin{lemma}\label{lemma:1}
Let $y^{*}$ denotes the latent accurate label. $y^{*} \thicksim Y$ is equivalent to the following statement: for every class $k \in [K]$, there exist an optimal set of parameters $w^{*[M]}$, which satisfying $ P (Y_{i,j} = \upsilon^{k} |w^{*[M]};z^{[M]};p)$ equals $P (Y^{*}_{i,j} =  \upsilon^{k} | Z^{[M]} = z^{[M]})$.
\end{lemma}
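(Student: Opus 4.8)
The statement is essentially a definitional unpacking, so the plan is to show the two directions of the equivalence by explicitly exhibiting the parameters $w^{*[M]}$ that realize the latent label $y^{*}$ as a draw from the fusion model defined in Eqn.~(\ref{equ:gt}). First I would write down what $Y$ is: by Eqn.~(\ref{equ:gt}), the random variable $Y_{i,j}$ is determined (up to the softmax normalization) by $\mathrm{softmax}\!\left(\sum_{m=1}^{M} w^{m}_{i,j}\cdot z^{m}_{i,j} + p_{i,j}\right)$, so $P(Y_{i,j}=\upsilon^{k}\mid w^{*[M]};z^{[M]};p)$ is just the $k$-th coordinate of this softmax vector. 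The claim ``$y^{*}\sim Y$'' means precisely that the categorical distribution from which $y^{*}_{i,j}$ is drawn coincides, coordinatewise, with $P(Y^{*}_{i,j}=\upsilon^{k}\mid Z^{[M]}=z^{[M]})$ — i.e.\ the accurate-label posterior given the observed multi-rater annotations. So the lemma reduces to: that posterior can be written in the softmax-of-weighted-sum form for some choice of weights, and conversely any such softmax form is a valid conditional label distribution.

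For the ``$\Rightarrow$'' direction I would argue as follows. Given the target distribution $q_k := P(Y^{*}_{i,j}=\upsilon^{k}\mid Z^{[M]}=z^{[M]})$, which is a probability vector over $[K]$, I want weights $w^{*[M]}$ with $\mathrm{softmax}(\sum_m w^{m}_{i,j}\cdot z^{m}_{i,j}+p_{i,j})_k = q_k$. Because each $z^{m}_{i,j}$ is a one-hot vector (the annotation of rater $m$ at that pixel is one of the $K$ classes), the term $\sum_m w^m_{i,j}\cdot z^m_{i,j}$ is a $K$-vector whose $k$-th entry aggregates the $w^m$-values of exactly those raters who voted class $k$. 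Since the softmax is surjective onto the open probability simplex and $q$ lies in (the closure of) that simplex, there is a real vector $\ell$ with $\mathrm{softmax}(\ell)=q$ (e.g.\ $\ell_k=\log q_k$, handling zeros by a limiting argument or by assuming full support), and then one solves $\sum_m w^m_{i,j}\cdot z^m_{i,j}+p_{i,j}=\ell$ for the $w^m_{i,j}$ — this is a linear system that is solvable as long as every class receives at least one vote, and otherwise one absorbs the residual into the prior $p$. This produces the required optimal parameter set, establishing existence.

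For the ``$\Leftarrow$'' direction the argument is immediate: if such $w^{*[M]}$ exist with $P(Y_{i,j}=\upsilon^k\mid w^{*[M]};z^{[M]};p)=P(Y^{*}_{i,j}=\upsilon^k\mid Z^{[M]}=z^{[M]})$ for all $k$, then the law of $Y_{i,j}$ under the fusion model equals the law of $Y^{*}_{i,j}$ conditioned on the data, and since this holds at every pixel $(i,j)$ the full matrix-valued variable $Y$ has the same distribution as $Y^{*}$, i.e.\ $y^{*}\sim Y$. I would also note the role of the softmax normalization (it is what lets us match an arbitrary simplex point) and the i.i.d.\ assumption on $(X,Z^{[M]})$ (it is what makes ``conditioned on $Z^{[M]}=z^{[M]}$'' well defined pixelwise and consistent with the per-pixel classification viewpoint stated just before the lemma).

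The main obstacle — and the only place requiring care rather than bookkeeping — is the degenerate case where the linear system $\sum_m w^m_{i,j}\cdot z^m_{i,j}+p_{i,j}=\ell$ is not solvable for a given target $\ell$: this happens exactly when some class $k$ is assigned zero weight by the structure of the votes (no rater picked it) yet the target posterior wants $q_k>0$, or symmetrically when $q_k=0$ forces $\ell_k\to-\infty$. I would resolve this by allowing the prior term $p_{i,j}$ (which in the general model of Eqn.~(\ref{equ:gt}) is the raw-image-derived prior, not the uniform $p_u$) to carry the unconstrained component of $\ell$, so that the weighted-vote term only needs to supply the part of $\ell$ lying in the span of the observed one-hot directions; together $p$ and the $w^m$ then span all of $\mathbb{R}^K$ and the system is always solvable. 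With that caveat handled, both directions go through and the lemma follows; it then feeds directly into Proposition~\ref{pro:content1}, which identifies $w^m$ with $\log P(z^m\mid y)$.
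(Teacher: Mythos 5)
Your proposal is correct in substance but takes a genuinely different route from the paper's. The paper disposes of Lemma~\ref{lemma:1} in a single line: it is ``easy to see'' once one notes that $P(Y=y\mid w^{*[M]};z^{[M]};p)=\mathrm{softmax}(\sum_{m}w^{*m}\cdot z^{m}+p)$ defines a pixel-wise categorical distribution, so that matching it coordinate-wise against $P(Y^{*}_{i,j}=\upsilon^{k}\mid Z^{[M]}=z^{[M]})$ is exactly what $y^{*}\thicksim Y$ means; the substantive content (which weights achieve the match) is deferred to Proposition~\ref{pro:1}, where $w^{*m}_{i,j}(t)={\rm log}\,P(Z^{m}_{i,j}=\upsilon^{t}\mid Y^{*}_{i,j}=\upsilon^{k})$ is exhibited explicitly via Bayes' rule and the conditional independence of raters given the true label. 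You instead prove bare existence abstractly, via surjectivity of the softmax onto the probability simplex plus solvability of the linear system in the $w^{m}_{i,j}$. Your route is self-contained and needs no independence assumption, which is an advantage for the lemma as literally stated; the paper's route buys the explicit identification $w^{*m}={\rm log}\,P(z^{m}\mid y^{*})$ that the rest of the method (the supervision of DivP) actually relies on, which a pure existence argument does not supply. One caveat on your handling of the degenerate case: in the lemma only $w^{*[M]}$ is quantified over --- $p$ is the fixed image prior --- so ``absorbing the residual into $p$'' is not strictly available to you; the Bayes construction sidesteps this because there $p$ is pinned to ${\rm log}\,P(Y^{*})$ and every class, voted or not, receives a likelihood term. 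Given the paper's own looseness about whether $w^{m}$ is a per-pixel vector or a confusion matrix, this is a minor defect rather than a fatal gap, but it is the one point where your argument is weaker than an appeal to the explicit construction.
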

Lemma. \ref{lemma:1} is easy to see considering the pixel-wise categorical distribution $P ( Y = y | w^{*[M]}; z^{[M]}; p) = softmax (\sum_{m = 1}^{M} w^{*m} \cdot z^{m} + p).$

We then restate Proposition \ref{pro:1} with more details and prove it. 
\begin{proposition}\label{pro:1}
For every class $k,t \in [K]$, iif $w^{*m}_{i,j} (t) = {\rm log} \;P (Z^{m}_{i,j} = \upsilon^{t} | Y^{*}_{i,j} = \upsilon^{k})$, there has $ P (Y^{*}_{i,j} = \upsilon^{k} |Z^{m}_{i,j} = z^{[M]}_{i,j}) = softmax (\sum_{m = 1}^{M} w^{*m}_{i,j} \cdot z^{m}_{i,j} + p_{i,j})_{k} $.
\end{proposition}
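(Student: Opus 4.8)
The plan is to verify the claim by a direct Bayes-rule computation at a single pixel, treating the per-pixel segmentation as a $K$-class classification problem and exploiting the conditional-independence structure of the raters. First I would fix the pixel $(i,j)$ and suppress it from the notation, writing $Y^{*}$, $Z^{m}$, $p$ in place of $Y^{*}_{i,j}$, $Z^{m}_{i,j}$, $p_{i,j}$. The key modelling assumption I would make explicit is that, conditioned on the latent accurate label $Y^{*}$, the rater annotations $Z^{1},\dots,Z^{M}$ are mutually independent (the standard Dawid--Skene noise model), so that $P(Z^{[M]}=z^{[M]}\mid Y^{*}=\upsilon^{k}) = \prod_{m=1}^{M} P(Z^{m}=z^{m}\mid Y^{*}=\upsilon^{k})$. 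I would also take the prior on $Y^{*}$ to be the image prior $p$, i.e. $P(Y^{*}=\upsilon^{k}) \propto \exp(p_{k})$ where $p_{k}$ is the $k$-th component of $p$.

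Next I would write out Bayes' rule:
\begin{equation}\label{eqn:bayes-proposal}
P(Y^{*}=\upsilon^{k}\mid Z^{[M]}=z^{[M]}) = \frac{P(Y^{*}=\upsilon^{k})\,\prod_{m=1}^{M} P(Z^{m}=z^{m}\mid Y^{*}=\upsilon^{k})}{\sum_{k'=1}^{K} P(Y^{*}=\upsilon^{k'})\,\prod_{m=1}^{M} P(Z^{m}=z^{m}\mid Y^{*}=\upsilon^{k'})}.
\end{equation}
Since each $z^{m}$ is a one-hot vector $\upsilon^{t}$ for some $t$, the factor $P(Z^{m}=z^{m}\mid Y^{*}=\upsilon^{k})$ can be written as the inner product $\langle w^{*m}_{\text{prob},k}, z^{m}\rangle$ where $w^{*m}_{\text{prob},k}(t) = P(Z^{m}=\upsilon^{t}\mid Y^{*}=\upsilon^{k})$; taking logarithms turns the product into $\sum_{m} \log\langle \cdot\rangle = \sum_{m} \langle w^{*m}_{k}, z^{m}\rangle$ with $w^{*m}_{k}(t) = \log P(Z^{m}=\upsilon^{t}\mid Y^{*}=\upsilon^{k})$, exactly the definition in the statement. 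Substituting into \eqref{eqn:bayes-proposal} and using $P(Y^{*}=\upsilon^{k})\propto e^{p_{k}}$, the numerator becomes $\exp\!\big(\sum_{m} w^{*m}_{k}\cdot z^{m} + p_{k}\big)$ and the normalisation over $k'$ is precisely the softmax denominator, giving $P(Y^{*}=\upsilon^{k}\mid Z^{[M]}=z^{[M]}) = \mathrm{softmax}\big(\sum_{m} w^{*m}\cdot z^{m} + p\big)_{k}$. The ``iff'' direction is then immediate: if the softmax identity holds for all $k$, matching coefficients in the exponent (up to an additive constant absorbed by softmax's shift-invariance) forces $w^{*m}_{k}(t) = \log P(Z^{m}=\upsilon^{t}\mid Y^{*}=\upsilon^{k})$, and I would invoke Lemma~\ref{lemma:1} to identify the distribution of $Y$ with that of $Y^{*}$ so that the left-hand probabilities coincide.

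The main obstacle I anticipate is not the algebra but pinning down precisely which independence and prior assumptions are needed and confirming they are consistent with the generative model $P(Y=y\mid w^{*[M]};z^{[M]};p) = \mathrm{softmax}(\sum_m w^{*m}\cdot z^{m}+p)$ posited in Lemma~\ref{lemma:1}; in particular one must check that the ``shift by a constant'' freedom in each $w^{*m}$ (softmax only sees differences) is handled cleanly, and that the one-hot structure of the $z^{m}$ is genuinely used to pass from a product of probabilities to a sum of inner products. A secondary subtlety is the role of $p$ versus $p_{u}$: the proposition as stated uses the image-related prior $p$, so I would keep the prior general and note that the uniform-prior version used elsewhere in the paper is the special case $p = p_{u}$. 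Everything else is a routine rearrangement of Bayes' rule.
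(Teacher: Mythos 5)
Your proposal is correct and follows essentially the same route as the paper's proof: Bayes' rule at a fixed pixel, conditional independence of the raters given $Y^{*}$ to factor the likelihood, the one-hot structure of $z^{m}$ to turn the log-likelihood into $\sum_{m} w^{*m}\cdot z^{m}$, normalization into a softmax with $p$ identified as the log prior, and Lemma~\ref{lemma:1} for the identification of $Y$ with $Y^{*}$. You are in fact somewhat more careful than the paper, which leaves the Dawid--Skene independence assumption implicit and asserts uniqueness of $w^{*m}$ without addressing the shift-invariance of softmax that you correctly flag.
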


\begin{proof}
\begin{equation}\nonumber
\begin{split}
     &{\rm log} \;P (Y^{*}_{i,j} = \upsilon^{k} | Z^{[M]}_{i,j} = z^{[M]}_{i,j})\\
     =& {\rm log} \;P (Z^{[M]}_{i,j} = z^{[M]}_{i,j} | Y^{*}_{i,j} = \upsilon^{k}) \\ 
      &+ {\rm log} \; P (Y^{*}_{i,j} = \upsilon^{k})) - {\rm log} \;P (Z^{[M]}_{i,j} = z^{[M]}_{i,j})\\
     =& \sum_{m=1}^{M} {\rm log} \;P (Z^{m}_{i,j} = z^{m}_{i,j} | Y^{*}_{i,j} = \upsilon^{k}) \\ 
      &+ {\rm log} \; P (Y^{*}_{i,j} = \upsilon^{k}) - {\rm log} \;P (Z^{[M]}_{i,j} = z^{[M]}_{i,j})\\
\end{split}
\end{equation}\nonumber
     Denote ${\rm log} \; P (Y^{*}_{i,j} = \upsilon^{k})$ by $b_{i,j}$ and ${\rm log} \;P (Z^{[M]}_{i,j} = z^{[M]}_{i,j})$ by $c_{i,j}$, respectively.
\begin{equation}
\begin{split}
     =& \sum_{m=1}^{M} w^{*m}_{i,j} \cdot z^{m}_{i,j} + b_{i,j} - c_{i,j}\\
     =& {\rm log} \;e^{\sum_{m=1}^{M} w^{*m}_{i,j} \cdot z^{m}_{i,j} + b_{i,j} - c_{i,j}}\\
     & since \sum_{k} P (Y^{*}_{i,j} = \upsilon^{k} | Z^{[M]}_{i,j} = z^{[M]}_{i,j}) = 1\\
     =& {\rm log} \;\frac{e^{\sum_{m=1}^{M} w^{*m}_{i,j} \cdot z^{m}_{i,j} + b_{i,j} - c_{i,j}}}{\sum_{k} e^{{\rm log} \; ( P ( (Y^{*})_{i,j} = \upsilon^{k} |  (Z^{[M]})_{i,j} =  (z^{[M]})_{i,j}))}}\\
     =& {\rm log} \;\frac{e^{\sum_{m=1}^{M} w^{*m}_{i,j} \cdot z^{m}_{i,j} + b_{i,j}}}{\sum_{k}e^{\sum_{m=1}^{M} w^{*m}_{i,j} \cdot z^{m}_{i,j} + b_{i,j}}}\\
     =& {\rm log} \;softmax (\sum_{m=1}^{M} w^{*m}_{i,j} \cdot z^{m}_{i,j} + b_{i,j})_{k}\\
\end{split}
\end{equation}
    Let $p_{i,j} = {\rm log} \; P (Y^{*}_{i,j} = \upsilon^{k})$, it has $ P (Y^{*}_{i,j} = \upsilon^{k} | Z^{[M]} = z^{[M]}) = softmax (\sum_{m = 1}^{M} w^{*m} \cdot z^{m} + p)_{k} $. And it is obvious $w^{*m}_{i,j} (t) = {\rm log} \;P (Z^{m}_{i,j} = \upsilon^{t} | Y^{*}_{i,j} = \upsilon^{k})$ is the unique solution of the equation. According to Lemma \ref{lemma:1}, it is also the necessary condition of $y^{*} \sim Y$. 
\end{proof}

\begin{proposition}\label{pro:2-prove}
Solving Eqn. (\ref{equation:HQ}) is equivalent to solve the following equation:
\begin{equation}
\left\{\begin{array}{ll}
\begin{split}
\  \mathcal{Y}^{'}_{i}= &\arg min_{\mathcal{Y}^{'}_{i}} \frac{\beta}{2}\Vert \mathcal{Y}^{fuse}_{i-1} - \mathcal{Y}^{'}_{i}\Vert^{2}_{2} 
\\&+ \eta \; \mathcal{P}_{x}(\mathcal{Y}^{'}_{i}) & {\raisebox{.5pt}{\textcircled{\raisebox{-.9pt} {1}}}}
\\
\  \mathcal{Y}^{fuse}_{i} = &\arg min_{\mathcal{Y}^{fuse}_{i}} \frac{\beta + 1}{2} \Vert \mathcal{Y}^{fuse}_{i} - \mathcal{Y}^{'}_{i}\Vert^{2}_{2} \\&+\frac{1}{2}\Vert \mathcal{Y}^{fuse}_{i} - \mathcal{Y}^{self}_{i}\Vert^{2}_{2}& {\raisebox{.5pt}{\textcircled{\raisebox{-.9pt} {2}}}}
\end{split}
\\
\end{array}
\right.
\end{equation}
\end{proposition}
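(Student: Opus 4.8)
\noindent\emph{Proof strategy.} The plan is to exhibit Eqn.~(\ref{equation:HQ-adp}) as a verbatim rewriting of Eqn.~(\ref{equation:HQ}) under the change of variables furnished by Proposition~\ref{pro:1}. Proposition~\ref{pro:1} gives, pixelwise, $w^{m}=\log P(z^{m}\mid y)$, which says that the confidence tensor $\mathcal{W}$ and the estimated rater probability maps $\tilde{z}^{[M]}$ determine each other smoothly and invertibly on the interior of the probability simplex, via $w^{m}=\log\tilde{z}^{m}$ and $\tilde{z}^{m}=\exp(w^{m})$. First I would use this to rewrite the ``general convolution'': since the observed label tensor $\mathcal{Z}$ is fixed, $\mathcal{W}\otimes\mathcal{Z}=softmax(\sum_{m}w^{m}\cdot z^{m}+p_{u})=softmax(\sum_{m}\log(\tilde{z}^{m})\cdot z^{m}+p_{u})=\tilde{z}^{[M]}\odot z^{[M]}=\mathcal{Y}^{fuse}$. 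Next, the calibrated ground-truth $\mathcal{Y}$, which by Eqn.~(\ref{equ:gt}) is entirely a function of the confidences and the prior, is in the same coordinates the self-fusion $\tilde{z}^{[M]}\odot\mathcal{T}(\tilde{z}^{[M]})=\mathcal{Y}^{self}$, the thresholding $\mathcal{T}$ merely turning the soft predictions into label-shaped hard masks so that they occupy the same role as the observed $z^{[M]}$. Finally the auxiliary confidence $\mathcal{W}'$ maps to an auxiliary fusion $\mathcal{Y}'$, and the image-structural regularizer $\mathcal{P}_{x}$, being by design a prior on the segmentation mask, carries over to $\mathcal{P}_{x}(\mathcal{Y}')$.

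With this dictionary I would substitute term by term. In the first sub-problem of Eqn.~(\ref{equation:HQ}) the proximal term $\tfrac{\beta}{2}\Vert\mathcal{W}_{i-1}-\mathcal{W}'_{i}\Vert_{2}^{2}$ becomes a distance between the previous fusion $\mathcal{Y}^{fuse}_{i-1}$ and $\mathcal{Y}'_{i}$, and the regularizer becomes $\eta\,\mathcal{P}_{x}(\mathcal{Y}'_{i})$, which is exactly the first line of Eqn.~(\ref{equation:HQ-adp}). In the second sub-problem the proximal term becomes $\tfrac{\beta}{2}\Vert\mathcal{Y}^{fuse}_{i}-\mathcal{Y}'_{i}\Vert_{2}^{2}$, while the data term $\tfrac{1}{2}\Vert\mathcal{W}_{i}\otimes\mathcal{Z}-\mathcal{Y}_{i}\Vert_{2}^{2}$ becomes $\tfrac{1}{2}\Vert\mathcal{Y}^{fuse}_{i}-\mathcal{Y}^{self}_{i}\Vert_{2}^{2}$; merging the two quadratics in the single optimized variable $\mathcal{Y}^{fuse}_{i}$ and pulling out the common factor yields the $\tfrac{\beta+1}{2}$ coefficient on the second line. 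Because $\mathcal{W}\mapsto\mathcal{W}\otimes\mathcal{Z}$ is a bijection onto its image for fixed $\mathcal{Z}$ on the simplex interior, minimizing over $\mathcal{W}'_{i},\mathcal{W}_{i}$ and minimizing over $\mathcal{Y}'_{i},\mathcal{Y}^{fuse}_{i}$ are in one-to-one correspondence, so the two iterative schemes have the same solution set, which is what the proposition asserts.

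The step I expect to require the most care is that the fusion map is a \emph{nonlinear} (softmax-of-a-weighted-sum) diffeomorphism, so an $\ell_{2}$ ball in confidence space is not literally an $\ell_{2}$ ball in fusion space: strictly speaking $\tfrac{\beta}{2}\Vert\mathcal{W}_{i-1}-\mathcal{W}'_{i}\Vert_{2}^{2}$ transforms into a positive-definite \emph{weighted} quadratic in the fusions, not the plain $\tfrac{\beta}{2}\Vert\mathcal{Y}^{fuse}_{i-1}-\mathcal{Y}'_{i}\Vert_{2}^{2}$. I would close this gap by working in the regime the half-quadratic scheme actually operates in, $\beta\to\infty$: there the proximal terms dominate and force the chain $\mathcal{W}'_{i}=\mathcal{W}_{i}=\mathcal{W}_{i-1}$ together with the constraint $\mathcal{W}_{i}\otimes\mathcal{Z}=\mathcal{Y}_{i}$ (equivalently $\mathcal{Y}'_{i}=\mathcal{Y}^{fuse}_{i}=\mathcal{Y}^{fuse}_{i-1}$ and $\mathcal{Y}^{fuse}_{i}=\mathcal{Y}^{self}_{i}$) at the limit, and these fixed-point equations are identified exactly through the bijection above, independently of the particular positive-definite weighting on the quadratics. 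Equivalently, one can invoke the elementary fact that replacing a quadratic penalty by another penalty with the same interior minimizer leaves the $\arg\min$ unchanged, together with the full rank of the Jacobian of the fusion map on the simplex interior, so the stationarity conditions of the two problems coincide. Either route promotes the formal substitution to a genuine equivalence of the two optimization problems.
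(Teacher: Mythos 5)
Your overall strategy --- push the change of variables $\mathcal{W}\mapsto\mathcal{W}\otimes\mathcal{Z}=\mathcal{Y}^{fuse}$ (justified by Proposition~\ref{pro:1}) through both sub-problems of Eqn.~(\ref{equation:HQ}) term by term --- is the same as the paper's, and your discussion of why an $\ell_2$ ball in confidence space is not an $\ell_2$ ball in fusion space is actually \emph{more} careful than the paper, which performs the formal substitution $\Vert\mathcal{W}_{i-1}-\mathcal{W}'_i\Vert^2\rightarrow\Vert\mathcal{W}_{i-1}\otimes Z-\mathcal{W}'_i\otimes Z\Vert^2$ without comment. Your first sub-problem is handled essentially as in the paper.

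There is, however, a genuine gap in your treatment of the second sub-problem, precisely at the $\tfrac{\beta+1}{2}$ coefficient. Your substitution of the data term $\tfrac{1}{2}\Vert\mathcal{W}_i\otimes\mathcal{Z}-\mathcal{Y}_i\Vert^2_2\rightarrow\tfrac{1}{2}\Vert\mathcal{Y}^{fuse}_i-\mathcal{Y}^{self}_i\Vert^2_2$ together with the proximal term gives the objective $\tfrac{\beta}{2}\Vert\mathcal{Y}^{fuse}_i-\mathcal{Y}'_i\Vert^2_2+\tfrac{1}{2}\Vert\mathcal{Y}^{fuse}_i-\mathcal{Y}^{self}_i\Vert^2_2$, i.e.\ coefficient $\tfrac{\beta}{2}$, not $\tfrac{\beta+1}{2}$. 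You cannot ``merge the two quadratics and pull out the common factor'': they are centered at different anchors ($\mathcal{Y}'_i$ versus $\mathcal{Y}^{self}_i$), so rewriting $\tfrac{\beta}{2}$ as $\tfrac{\beta+1}{2}$ changes the objective and its minimizer. The paper obtains the extra $\tfrac{1}{2}$ by a different move: it does \emph{not} identify the calibrated ground truth $\mathcal{Y}_i$ with $\mathcal{Y}^{self}_i$ alone, but decomposes the data term as $\tfrac{1}{2}\Vert\mathcal{Y}^{fuse}_i-\mathcal{Y}_i\Vert^2_2\approx\tfrac{1}{2}\bigl(\Vert\mathcal{Y}^{fuse}_i-\mathcal{Y}^{self}_i\Vert^2_2+\Vert\mathcal{Y}^{fuse}_i-\mathcal{Y}'_i\Vert^2_2\bigr)$, interpreting $\mathcal{Y}_i$ as a self-fusion plus an image-prior compensation that is anchored at $\mathcal{Y}'_i$ (since $\mathcal{Y}'_i$ is the variable optimized under the prior constraint $\mathcal{P}_x$); the second summand then combines with the proximal term to produce $\tfrac{\beta+1}{2}$. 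Your clean dictionary $\mathcal{Y}\leftrightarrow\mathcal{Y}^{self}$ cannot produce that term, so as written your argument proves equivalence to a system with coefficient $\tfrac{\beta}{2}$, not the stated one. (Your $\beta\to\infty$ fixed-point argument would rescue the \emph{limiting} equivalence regardless of the coefficient, but that is a weaker statement than the proposition and is not the route the paper takes.)
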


\begin{proof}
Given the half-quadratic problem:
\begin{equation}\label{equation:HQ}
\left\{\begin{array}{ll}
\begin{split}
\  \mathcal{W}^{'}_{i}= &\arg min_{\mathcal{W}^{'}_{i}} \frac{\beta}{2}\Vert \mathcal{W}_{i-1} - \mathcal{W}^{'}_{i}\Vert^{2}_{2} 
\\&+ \eta \; \mathcal{P}_{x}(\mathcal{W}^{'}_{i}) & {\raisebox{.5pt}{\textcircled{\raisebox{-.9pt} {1}}}}
\\
\  \mathcal{W}_{i} = &\arg min_{\mathcal{W}_{i}} \frac{\beta}{2} \Vert \mathcal{W}_{i} - \mathcal{W}^{'}_{i}\Vert^{2}_{2} \\&+\frac{1}{2}\Vert \mathcal{W}_{i} \otimes Z - \mathcal{Y}_{i}\Vert^{2}_{2}& {\raisebox{.5pt}{\textcircled{\raisebox{-.9pt} {2}}}}
\end{split}
\\
\end{array}
\right.
\end{equation}
We can transfer the sub-problem (\ref{equation:HQ})-\(\raisebox{.5pt}{\textcircled{\raisebox{-.9pt} {1}}}\) to:
\begin{equation}\label{equ:app-re1}
\begin{split}
     \mathcal{W}^{'}_{i} \otimes Z= &\arg min_{\mathcal{W}^{'}_{i}\otimes Z} \frac{\beta}{2}\Vert \mathcal{W}_{i-1} \otimes Z - \mathcal{W}^{'}_{i} \otimes Z\Vert^{2}_{2} 
\\&+ \eta \; \mathcal{P}_{x}(\mathcal{W}^{'}_{i} \otimes Z)
\end{split}
\end{equation}
We define $\mathcal{Y}^{'}_{i} = softmax (\sum_{m = 1}^{M} w^{'m}_{i} \cdot z^{m} + p_{u}) = \mathcal{W}^{'}_{i} \otimes Z$, and since
\begin{equation}
\begin{split}
    y^{fuse} = \tilde{z} \odot z &= softmax (\sum_{m = 1}^{M} {\rm log} \; (z^{m}) \cdot \mathcal{T}  (\tilde{z}^{m}) + p_{u})\\
    &= \mathcal{W} \otimes Z,
\end{split}
\end{equation}
we can rewrite (\ref{equ:app-re1})-\(\raisebox{.5pt}{\textcircled{\raisebox{-.9pt} {1}}}\) as:
\begin{equation}\label{equ:app-res1}
\begin{split}
\  \mathcal{Y}^{'}_{i}= &\arg min_{\mathcal{Y}^{'}_{i}} \frac{\beta}{2}\Vert \mathcal{Y}^{fuse}_{i-1} - \mathcal{Y}^{'}_{i}\Vert^{2}_{2} 
\\&+ \eta \; \mathcal{P}_{x}(\mathcal{Y}^{'}_{i}).
\end{split}
\end{equation}

Likewise, we can transfer the problem of (\ref{equation:HQ})-\(\raisebox{.5pt}{\textcircled{\raisebox{-.9pt} {2}}}\) to:
\begin{equation}\label{equ:app-re2-1}
\begin{split}
     \mathcal{W}_{i} \otimes Z= &\arg min_{\mathcal{W}_{i}} \frac{\beta}{2} \Vert \mathcal{W}_{i} \otimes Z - \mathcal{W}^{'}_{i} \otimes Z\Vert^{2}_{2} + \\
     &\frac{1}{2}\Vert \mathcal{W}_{i} \otimes Z - \mathcal{Y}_{i}\Vert^{2}_{2}\\
\end{split}
\end{equation}
Thus Eqn. \ref{equ:app-re2-1} can be represented as:
\begin{equation}\label{equ:app-re2-2}
\begin{split}
     \mathcal{Y}^{fuse}_{i}= &\arg min_{\mathcal{Y}^{fuse}_{i}} \frac{\beta}{2} \Vert \mathcal{Y}^{self}_{i} - \mathcal{Y}^{'}_{i}\Vert^{2}_{2} + \\
     &+\frac{1}{2}\Vert \mathcal{Y}^{fuse}_{i} - \mathcal{Y}_{i}\Vert^{2}_{2}\\
\end{split}
\end{equation}
Since $\mathcal{Y}$ is defined as the fusion of $\mathcal{W}$ and $\mathcal{Z}$ when it gets the closest to $\Vert p_{u} - p_{x}\Vert^{2}_{2}$, we represent $\mathcal{Y}$ as an approximate self-fusion $\mathcal{W} \otimes e^{\mathcal{W}}$ (according to Proposition \ref{pro:1}, $e^{\mathcal{W}} = \tilde{Z}$) with a compensation of this approximation which related to the image prior. Since $\mathcal{Y}^{'}$ is optimized under the constraint of the image prior, we take $\mathcal{Y}^{'}$ to construct the compensation and reorganize Eqn. \ref{equ:app-re2-2} as:
\begin{equation}\label{equ:app-re2-3}
\begin{split}
     \mathcal{Y}^{fuse}_{i} = 
     & \arg min_{\mathcal{Y}^{fuse}_{i}} \frac{\beta}{2} \Vert \mathcal{Y}^{fuse}_{i} - \mathcal{Y}^{'}_{i}\Vert^{2}_{2} + \\
     &+\frac{1}{2} (\Vert \mathcal{Y}^{fuse}_{i} - \mathcal{Y}^{self}_{i}\Vert^{2}_{2} + \Vert \mathcal{Y}^{self}_{i} - \mathcal{Y}^{'}_{i}\Vert^{2}_{2})\\
     = & \arg min_{\mathcal{Y}^{fuse}_{i}} \frac{\beta + 1}{2} \Vert \mathcal{Y}^{fuse}_{i} - \mathcal{Y}^{'}_{i}\Vert^{2}_{2} \\&+\frac{1}{2}\Vert \mathcal{Y}^{fuse}_{i} - \mathcal{Y}^{self}_{i}\Vert^{2}_{2}
\end{split}
\end{equation}

Thus, we have \ref{equation:HQ} is equivalent to:
\begin{equation}
\left\{\begin{array}{ll}
\begin{split}
\  \mathcal{Y}^{'}_{i}= &\arg min_{\mathcal{Y}^{'}_{i}} \frac{\beta}{2}\Vert \mathcal{Y}^{fuse}_{i-1} - \mathcal{Y}^{'}_{i}\Vert^{2}_{2} 
\\&+ \eta \; \mathcal{P}_{x}(\mathcal{Y}^{'}_{i}) & {\raisebox{.5pt}{\textcircled{\raisebox{-.9pt} {1}}}}
\\
\  \mathcal{Y}^{fuse}_{i} = &\arg min_{\mathcal{Y}^{fuse}_{i}} \frac{\beta + 1}{2} \Vert \mathcal{Y}^{fuse}_{i} - \mathcal{Y}^{'}_{i}\Vert^{2}_{2} \\&+\frac{1}{2}\Vert \mathcal{Y}^{fuse}_{i} - \mathcal{Y}^{self}_{i}\Vert^{2}_{2}& {\raisebox{.5pt}{\textcircled{\raisebox{-.9pt} {2}}}}
\end{split}
\\
\end{array}
\right.
\end{equation}
\end{proof}

\end{document}